\documentclass[10pt, compsoc]{IEEEtran}
\usepackage[ruled,linesnumbered]{algorithm2e}
\usepackage{multirow}
\usepackage{balance}
\usepackage{booktabs}
\usepackage{ragged2e}
\usepackage{tabularx}
\usepackage{graphicx} 
\usepackage{subcaption}
\usepackage[numbers, sort&compress]{natbib}
\usepackage{svg}
\usepackage{url} 
\usepackage{amsmath}
\usepackage{amsfonts, amssymb}
\usepackage{expl3}
\usepackage{soul}  
\usepackage{amsthm}
\usepackage{mathtools}
\usepackage{etoolbox}
\usepackage{colortbl}
\usepackage[T1]{fontenc}
\usepackage{dblfloatfix} 
\usepackage{xparse}
\usepackage{xstring}
\usepackage{bbm}
\usepackage{changepage}
\usepackage{framed}

\newcolumntype{L}{>{\RaggedRight\hangafter=1\hangindent=1em}X}

\makeatletter
 {\par\unskip\endMakeFramed}
\makeatother

\definecolor{formalshade}{rgb}{0.93,0.93,0.93}
\definecolor{darkblue}{rgb}{0.2, 0.2, 0.2}

\newenvironment{formal}{%
  \def\FrameCommand{%
    \hspace{1pt}%
    {\color{darkblue}\vrule width 2pt}%
    {\color{formalshade}\vrule width 4pt}%
    \colorbox{formalshade}%
  }%
  \MakeFramed{\advance\hsize-\width\FrameRestore}%
  \noindent\hspace{-1pt}
  \begin{adjustwidth}{}{7pt}%
  \vspace{2pt}\vspace{2pt}%
}
{%
  \vspace{3pt}\end{adjustwidth}\endMakeFramed%
  }

\newtheorem{theorem}{Theorem}
\newtheorem{corollary}{Corollary}[theorem]
\newtheorem{lemma}{Lemma}

\newcommand{\bi}{\begin{itemize}}
	\newcommand{\ei}{\end{itemize}}
\newcommand{\be}{\noindent\begin{enumerate}}
	\newcommand{\ee}{\noindent\end{enumerate}}

\DeclarePairedDelimiter\abs{\lvert}{\rvert}%
\DeclarePairedDelimiter\norm{\lVert}{\rVert}%
\makeatletter
\let\oldabs\abs
\def\abs{\@ifstar{\oldabs}{\oldabs*}}
\let\oldnorm\norm
\def\norm{\@ifstar{\oldnorm}{\oldnorm*}}
\makeatother

\newcommand{\graycolor}{gray!20}
\newcommand{\gray}[1]{\cellcolor{\graycolor}\textbf{#1}}

\usepackage{enumitem}
\setitemize{noitemsep,topsep=0pt,parsep=0pt,partopsep=0pt,leftmargin=*}

\DeclareMathOperator*{\argmax}{arg\,max}
\DeclareMathOperator*{\argmin}{arg\,min}

\newcommand{\IT}{\texttt{SMOOTHIE}}
\newcommand{\smoothness}{smoothness}

\newcommand{\respto}[1]{
\fcolorbox{black}{black!15}{%
\label{resp:#1}%
\bf\scriptsize R{#1}}}


\newcommand{\BLUE}{\ifshowchanges \color{blue} \fi}
\newcommand{\BLACK}{\color{black}}

%

\ExplSyntaxOn

\NewDocumentCommand{\citeresp}{m}
{
  (see~
  \seq_set_split:Nnn \l_citeresp_items_seq {, } { #1 } 
  \seq_pop_right:NN \l_citeresp_items_seq \l_lastitem_tl 
  \seq_map_function:NN \l_citeresp_items_seq \__citeresp_format:n 
  \__citeresp_formatnocomma:n \l_lastitem_tl
  )
}

\cs_new_protected:Nn \__citeresp_format:n
{
  \fcolorbox{black}{black!15}{\bfseries\scriptsize R#1}~on~page~\pageref{resp:#1},~
}

\cs_new_protected:Nn \__citeresp_formatnocomma:n
{
  \fcolorbox{black}{black!15}{\bfseries\scriptsize R#1}~on~page~\pageref{resp:#1}
}

\seq_new:N \l_citeresp_items_seq 

\ExplSyntaxOff

\newenvironment{response}[2]{
    \par\noindent
    \BLUE 
    \respto{#1} {#2}%
}{
    \par\noindent
    \BLACK
}

\newif\ifshowchanges
\showchangesfalse
\ifshowchanges
    \newcommand{\changed}[2]{\BLUE \respto{#1} {#2} \BLACK}
\else
    \newcommand{\changed}[2]{#2}
\fi

\begin{document}

\title{
     Is Hyper-Parameter Optimization \\ Different for Software Analytics?
}

\author{Rahul Yedida, Tim Menzies,~\IEEEmembership{Fellow, IEEE}
\IEEEcompsocitemizethanks{\IEEEcompsocthanksitem R. Yedida and  T. Menzies are with the Department
of Computer Science, North Carolina State University, Raleigh, USA.
 \protect\\
E-mail: ryedida@ncsu.edu, timm@ieee.org
}
}

\markboth{IEEE Transactions on Software Engineering}%
{Yedida \MakeLowercase{\textit{et al.}}: {\smoothness} finds better hyper-parameter options}

\IEEEtitleabstractindextext{
\begin{abstract}



Yes. SE data can have ``smoother'' boundaries between classes (compared to traditional AI data sets). To be more precise, the magnitude of the second derivative of the loss function found in SE data is typically much smaller. A new hyper-parameter optimizer, called {\IT}, can exploit this idiosyncrasy of SE data. We compare {\IT} and a state-of-the-art AI hyper-parameter optimizer on three tasks: (a) GitHub issue lifetime prediction (b) detecting static code warnings false alarm; (c) defect prediction. For completeness, we also show experiments on some standard AI datasets. {\IT} runs faster and predicts better on the SE data--but ties on non-SE data with the AI tool. Hence we conclude that SE data can be different to other kinds of data; and those differences mean that we should use different kinds of algorithms for our data. To support open science and other researchers working in this area, all our scripts and datasets are available on-line at \url{https://github.com/yrahul3910/smoothness-hpo/}.
\end{abstract}
\begin{IEEEkeywords}
    software analytics, smoothness, hyper-parameter optimization
\end{IEEEkeywords}
}

\maketitle

\IEEEpeerreviewmaketitle

\section{Introduction}
To say the least, much current work in SE uses
algorithms from   the AI community. Usually, these algorithms are used in their default ``off-the-shelf'' configuration~\cite{Fu16ist,xu15fse}.  

Is this unwise? Is software engineering
data  fundamentally the same as those used to develop those AI tools? Or does software engineering datasets have unique characteristics   that makes it different from others, and therefore would benefit from other kinds of AI tools?

This article argues that it is \underline{\bf not} always
advisable to apply standard AI tools to SE data.
Specifically, in the case of hyper-parameter optimization (HPO), we show that:
\bi 
\item  {\em AI for SE is different to standard AI}. SE data can have much  ``smoother'' boundaries between classes. To be  
precise, compared to AI data,  the magnitude of the second derivative of the loss function found in SE data can be  much smaller.
\item  For SE data, {\em better HPO result can be achieved using tools designed for SE data}. 
\ei
This is an important result since HPO is an important problem.
Hyper-parameter optimizers    explore the myriad of configuration choices which control either the  learning  or the data pre-processing.
As shown by
 Table~\ref{tab:hyp:eg} and Table~\ref{what}, there are many such hyper-parameters.
 Much SE research has shown that HPO  is very   effective at improving the performance of models built from SE data~\cite{fu2016tuning, agrawal2019dodge, yedida2021value, yedida2023find}; for example, see   Figure~\ref{tant}.
Nevertheless, while  studying industrial text-mining tools, we  have found  many examples of poorly-chosen hyper-parameters (which
 meant our industrial partners were selling less-than-optimal  products)~\cite{rahul16fse}.

\begin{figure*}[!t]
\begin{center} \includegraphics[width=.6\linewidth]{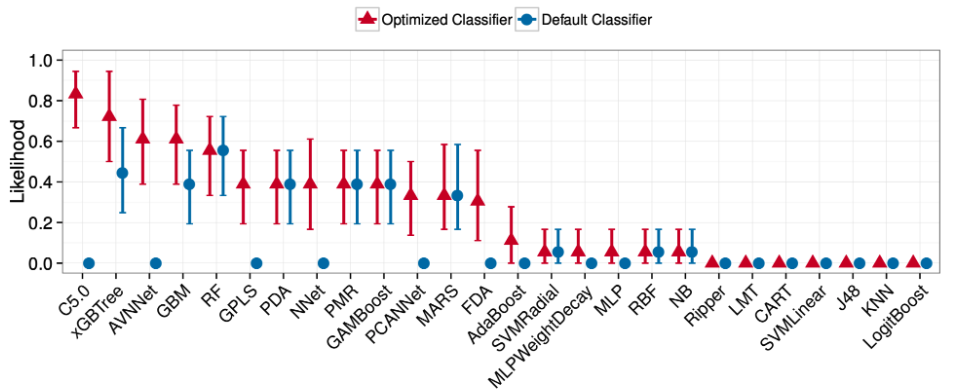}\end{center}
\caption{Defect prediction: likelihood of a learner performing best. BLUE shows   performance pre-optimization and RED shows very large  performance
 improvements (post-optimization). 
E.g. on the left-hand-side,
 the C5.0 decision tree learner had {\em worst} performance (pre-optimization) and {\em best} performance (post-optimization). From~\cite{Tantithamthavorn16}.}\label{tant}
\end{figure*}

This paper offers a new hyper-parameter optimizer called {\IT} learned from SE data that selects hyper-parameter options according to how well those options
simplify the boundary between (say) defective and non-defective modules. For this paper, the important features of {\IT} are that:
\bi 
\item For SE data, {\IT}  makes better predictions than   state-of-the-art-methods taken from AI; 
\item {\IT}  runs much faster than that  state-of-the-art;
\item But {\IT} performs poorly   for non-SE data.
\ei
To say all that another way: 
\begin{quote}
    {\em SE data can be different from other kinds of data; and those differences mean that we should use different kinds of algorithms for our data.} 
\end{quote}

\changed{Ea1.1}{
    To summarize, the core contributions of this paper are:

    \begin{itemize}
        \item We extend empirical observations from prior work on static code warnings, which showed that SE loss landscapes can be ``smoother'' than non-SE landscapes, to show that this is a domain-specific idiosyncrasy of SE data. We connect this observation with intuitive explanations for why SE data exhibits this property, while standard AI data \textit{cannot}. The crux of this paper is that this property is specific to SE data, and that SE research should devise algorithms that leverage such unique properties.
        \item We exploit this property to devise a hyper-parameter optimization method tailored to SE data. Our method uses a very cheap probing computation to test the smoothness of loss landscapes, discarding configurations that are unpromising. We motivate this method with recent work from the AI literature showing that such ``smooth'' landscapes are desirable for generalization to unseen data.
        \item We derive semi-empirical equations to compute the smoothness in a computationally cheap manner for different learning algorithms. We then experiment on three SE tasks (for a total of 19 datasets), showing that our method outperforms standard AI methods, while also being faster. Further, we show that the assumption underpinning this method (that of ``smooth'' loss landscapes) does not hold in general for AI datasets--which means standard AI methods may be preferable in such scenarios.
    \end{itemize}

    However, it is important that the reader be aware of the limitations of our method (these are expanded upon in the Limitations section at the end of this paper):
    \begin{itemize}
        \item As presented here, our method relies on deriving \textit{learner-specific equations} for smoothness. In future work, we will extend this to more learning algorithms and also devise a learner-agnostic method; but that is out of scope of this paper.
        \item The above limitation leads to some complications in implementations. We provide a reference implementation for our method, which uses \textit{function overloading} to use the type of the classifier passed to select the corresponding equation.
    \end{itemize}
}

The rest of this paper is structured as follows. The next section 
discusses how SE data is different to AI data.
This is followed by 
notes on related work, specifically, hyperparameter optimization and its role in software analytics. 
 Section \ref{sec:experiments} discusses our experimental setup, and the results of those experiments are presented in Section \ref{sec:results}. We discuss why {\IT} works so well for SE data along with threats to validity in Section \ref{sec:discussion}, before discussing limitations in Section \ref{sec:limitations} and future work in Section \ref{sec:futurework}.




\section{Unique Properties of SE Data}
Our premise is that 
SE needs its own kind of AI since data from SE projects is somehow fundamentally different (than the data used to develop standard AI algorithms).
This section offers evidence for that premise, in two parts.
Firstly we will argue that we {\em expect} SE data to be smooth. Secondly, we will also say that other
kinds of data may not be as smooth.

\subsection{Expect Smoothness in SE Data}\label{sec:whydiff}

A repeated result in SE is that SE artifacts are surprisingly regular.
\citet{Hindle12} writes that ``programming languages,
in theory, are complex, flexible and powerful, but the programs that real
people write are ... rather repetitive, and thus they have
statistical properties that can be captured''.

There are many   possible reasons for this repetition.
For example, 
successful software is {\em  maintainable}; i.e. 
  it   contains
 structures {\em recognizable} to those
 doing the maintenance. 
  Hence there are selection pressures forcing those 
 who write software to     make repeated   use of:
 \begin{itemize}
 \item Readily identified   idioms such as   for-loops;
 \item Or higher level patterns such as subject-observer~\citep{gamma1993design};
 \item
 Or even higher-level architectural patterns such as the microservices architecture discussed by \cite{Chen18}.  
 \end{itemize}
Such repetition means that the state space of software
  can contain   many repeated structures-- which would mean that  different parts of a code base may be surprisingly similar. Data generated from such  repetitive structures might itself be repetitive (with minimal differences between  adjacent clusters in the data).

\begin{figure}[!t]
\begin{center}
  \includegraphics[width=1in]{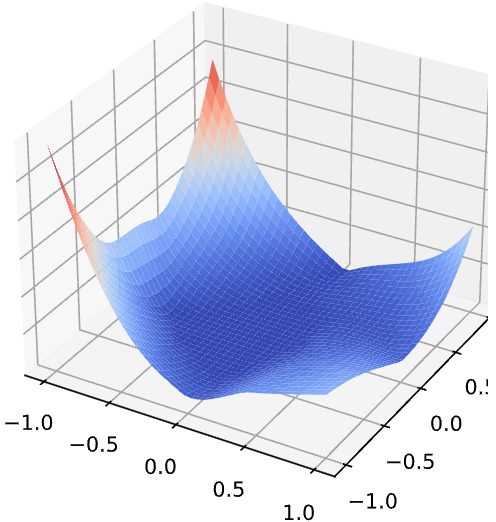} ~~~~~~~~~~\includegraphics[width=1.4in]{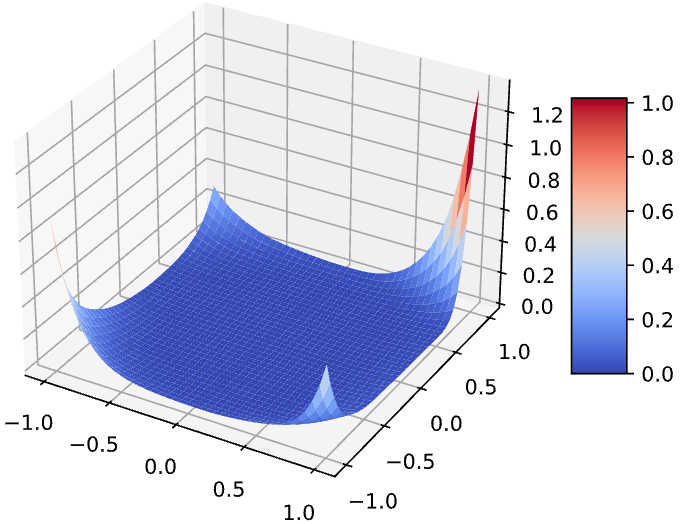}
  \end{center}
        \caption{Left \& right shows loss landscape before \& after HPO.
The loss function  be visualized as the $z$
dimension of  domain data   mapped into two PCA components $x,y$.   In this figure, red indicates  regions with highest cross-entropy error.} \label{fig:loss}
    \end{figure}
    
  Whatever the reason for this repetition, that regularity would mean there are few differences between adjacent parts of the code. 
  Such similarities would manifest
  as smooth surfaces within the data. Hence, for some time now,
  the authors have been asking if
we   can   {\em see} such smoothness and can we 
  {\em use} that smoothness to achieve some useful results. 

It was therefore   gratifying, in Fall 2023, to actually observe such smoothness. While working on static code analysis false alarm identification  and hyper-parameter optimization~\cite{yedida2023find},
 one of us
 (Yedida)  noticed a pattern in changes to the 
 {\em loss functions} before and after hyper-parameter optimization.
Loss functions are surrogates for metrics that have gradients (and the better the learner, the lower the loss, generally speaking). 
Figure~\ref{fig:loss} left and right
shows that:
\begin{quote}
    {\em After optimization, the  loss function gets ``smoother''  (i.e. on average, fewer changes in any local region).} \end{quote}

 \noindent To say that another way, a learner that respects the smoothness landscape of SE data does better than otherwise.

\begin{table}[!b]
    \caption{Different kinds of landscapes}
    \label{tab:landscapes}

    \begin{tabularx}{\linewidth}{|L|}
        \hline   
        \textbf{Loss landscape:} Loss landscapes are the concept this paper predominantly talks about, and refers to the topography of the loss function of a deep learner. This is a function of the model's parameters, and therefore exists in a high-dimensional space. \\
        \hline
       \textbf{Hyper-parameter landscape:} The hyper-parameter landscape \cite{pushak2022automl, huanghyperparameter} is the topography of a loss function whose inputs are the hyper-parameters chosen. This function typically involves the model's loss function, and in an empirical risk minimization formulation can be written as a bilevel optimization problem \\
      
        \begin{center}
           $
            \begin{aligned}
            h^* &= \argmax\limits_{h \sim \mathcal{H}} \mathbb{E}_\mathcal{X \sim D}\left[ f(\theta^*(h))  \right] \\
            \theta^*(h) &= \argmin\limits_{\theta \in \mathbb{R}^d} \mathcal{L}(X; \theta, h)
            \end{aligned}
        $
        \end{center}
        
        where $f$ is a performance measure (such as accuracy), and for brevity, we have omitted that the model is typically evaluated on a different dataset held out from the training set.

        While the hyper-parameter optimization problem itself is over this landscape, we use a random search to explore it in this paper, and for the chosen hyper-parameters at each step, we use the smoothness to decide if it is worth exploring further.
        \\ \hline
    \end{tabularx}
\end{table}

\changed{3a3.1}{
 (Aside: it should be noted that the machine learning literature distinguishes ``loss landscapes'' from the ``hyperparameter landscape'', which is the loss value distribution over the space of hyper-parameters). So, to be clear, when we say ``landscape'' we are referring to the loss landscape (for more on this, please  see Table~\ref{tab:landscapes}).
 }

 Based on that observation, we theorized that:
\begin{quote}
    {\em Hyper-parameter optimization is best viewed as a ``smoothing'' function for the
decision landscape.}
\end{quote}

This theory predicts that the greater the smoothing, the  better 
the hyper-parameter optimization. This paper tests that prediction.
The algorithm we call {\IT} is an experiment on whether or not increasing smoothness could be used to guide hyper-parameter optimization (HPO).
As shown in this paper,  
 {\IT}  can assess hyper-parameter options {\em without needing
to run a learner to completion}. For example, for
neural networks that learn over 100 epochs, {\IT} can assess hyper-parameter options within the first epoch. That is, {\IT} can focus on good configurations by very quickly rejecting bad configurations using a fraction of the CPU cost of conventional hyper-parameter optimizers.

\subsection{Other Kinds of Data May be Less Smooth}

We argued above that SE  practices are designed to generate maintainable software, and that such practices 
may also generate SE data that exhibits smoothness.
Data from some non-SE  sources  is not constrained in the same way. Hence,
we would expect non-SE data to be less smooth than SE data. 

This is actually the case.
Figure~\ref{fig:beta-dist} shows the smoothness measured in the data 
used in this study. This figure was calculated using some equations 
described later in this paper.  In summary, smoothness is a measure of the magnitude of the second
derivative in the loss function. {\em Lower}   smoothness means 
{\em more  regularity} and {\em repeated properties} between those adjacent parts\footnote{Although unintuitive, the mathematical term ``smoothness'' is a bound on the \textit{sharpness} as opposed to the intuitive smoothness. We will make this precise later.}.

\begin{figure}[!t]
        \includegraphics[width=\linewidth]{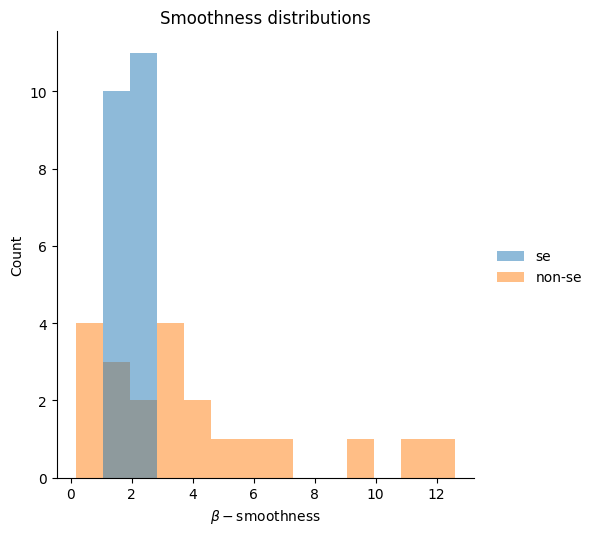}
        \caption{Distribution of smoothness for SE and non-SE datasets. 
        Calculated using the maths
        of  \S\ref{fig:beta-dist}.
        See \S\ref{analytics} for the data used in this figure.
For details on how the x-axis values are calculated, see \S\ref{about}.
Note the mean values for the two populations are different:
mean smoothness = 2, 5 for SE, AI respectively. }
        \label{fig:beta-dist}
    \end{figure}

In Figure~\ref{fig:beta-dist},
blue shows the smoothness of the loss function for the SE data mentioned in the last section. 
Also in that figure, orange 
shows the smoothness seen in various AI data sets from UCI
(the data sets here were selected from prior work that also explored the value of different algorithms
for SE and non-SE dats~\cite{agrawal2021simpler}).

The stand-out feature of this graph is that the SE data has dramatically different smoothness to the non-SE data. 
This paper will   show that the x-axis of this figure can predict  what algorithm will be most successful:
\bi
\item At high smoothness values, {\IT} loses to standard AI algorithms.
\item  But, as we shall show,
{\IT} defeats them
at low smoothness values. 
  \ei
Clearly, this result supports the claim in the introduction that (a)~SE data can be different to standard AI data so (b)~it is unwise to always assume that  AI algorithms should be applied to SE data.

\section{  Hyper-parameter Optimization (HPO)}\label{aboutHPO}

We said above that low/high smoothness can predict for our new
algorithm winning/losing against standard AI algorithms.
The task in that experimental comparison was {\em hyperparameter optimization} (HPO). This section
defines HPO, and offers examples of how it is used in SE. Afterwards, the rest of this paper uses HPO to show the value of smoothness.

\begin{table} 
{\small
\caption{A sample hyper-parameter choices  for 
\underline{\bf learners} (note, this is not a complete set).}
\label{tab:hyp:eg}
\begin{tabular}{|p{3.3in}|}\hline
\rowcolor{blue!5}
{\em  Feedforward networks} might have (say) 3 to 20 layers and 2 to 6 units per layer
\\\hline
 {\em Deep learners} have   many more   control parameters 
including how to set their topology and activation functions. 
\\\hline \rowcolor{blue!5}
{\em Neighbor classifiers} need   to know
 (a)~how   to measure
distance; (b)~how many neighbors to use; (c)~how to combine   those neighbors;
(d)~how (or if) to pre-cluster
the data (to optimizer th neighborhood lookup); etc. 
\\\hline
{\em Random forests} need to know how many trees to build, how many features to use in each tree, and how to  poll the whole forest (e.g., {\em majority} or {\em weighted majority}).\\\hline \rowcolor{blue!5}
Decisions for {\em logistic regression} include which
regularization terms ($L_1$, $L_2$) to use, (or, indeed, to  use both) and (b)~the strength of the regularization; i.e.  $C \in \{0.1, 1, 10\}$. \\\hline
\end{tabular}}
\end{table}
\begin{table} 
\caption{A sample of    hyper-parameter choices for 
\underline{\bf data pre-processing} (note, this is not a complete set).}\label{what}
\small\begin{tabular}{|p{.95\linewidth}|}\hline
\rowcolor{blue!5}
Some of   pre-processors are simple:
\bi 
\item normalization: $x \to \frac{x}{\lVert x \rVert}$; 
\item standardization: $x \to \frac{x - \mu_x}{\sigma_x}$;
\item min-max scaling: $x \to \frac{x-\min x}{\max x - \min x}$; 
\item max-absolute value scaling: $x \to \frac{x}{\max \lvert x \rvert}$); 
\item robust scaling: $x \to \frac{x - P_{50}(x)}{P_{75}(x) - P_{25}(x)}$ where $P_i(x)$ represents the $i$th percentile of x.
\ei
Various authors recommend one or more of these simpler pre-processors. For example, for instance-based methods \citet{aha91} recommend min-max scaling since it means numbers on different scales can be compared.
\\\hline
Another widely used pre-processor is SMOTE~\cite{chawla2002smote}. In order to handle unbalanced training data, SMOTE generates synthetic minority samples by extrapolating along line segments joining the nearest neighbors of the minority class.
\\\hline
 \rowcolor{blue!5}
``Fuzzy sampling'' \cite{yedida2021value},  concentrically adds $\frac{1/n}{2^i}$ samples at concentric layer $i$, for $\left\lfloor{\log_2{1/n}}\right\rfloor$  layers, where $n$ is the class imbalance ratio. Unlike other oversampling techniques, fuzzy sampling reverses the class imbalance. The authors show that applying it twice can also be beneficial. 
 \\\hline
Another   pre-processor from semi-supervised learning~\cite{yedida2023find} (which we call ``label engineering'') is to delete, at random $m - \sqrt{m}$ labels. These are then recovered by building a kd-tree \cite{bentley75} on the remaining $\sqrt{m}$ samples. For each lost label, the kd-tree is queried for the nearest $\sqrt[4]{m}$ samples, and the mode of their labels is assigned.
 \\\hline
\end{tabular}

\end{table}

\subsection{About HPO}

A learning system combines some data pre-processing and machine learning.
Every such system has magic control parameters that adjust the data pre-processing and/or the machine learning
(e.g. see Table~\ref{tab:hyp:eg} and Table~\ref{what}). Even something as simple
as the $k-$nearest neighbors classifier needs to decide how many neighbors
it should study in order to classify a new example (and, as seen in Table~\ref{tab:hyp:eg}, it must settle on several other configuration options). 

In essence, HPO is the process of fiddling with the control
parameters of a learner and the data pre-processing. The trivial way to do this is {\em grid search}, which is a set of nested for-loops across all possible values of all the control parameters. In practice, grid search can be impractically slow and ineffective~\cite{bergstra2012random}.

\begin{figure*}[!t]
    \includegraphics[width=\textwidth]{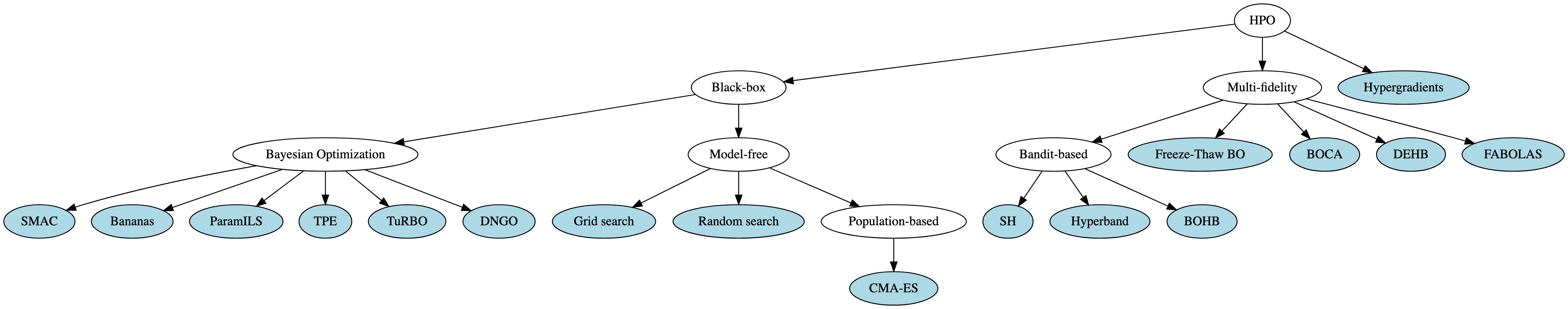}
    \caption{A brief overview of hyper-parameter optimization methods. For a more in-depth review, see \cite{shahriari2015taking, feurer2019hyperparameter, bischl2023hyperparameter}.}
    \label{fig:hpo:graph}
\end{figure*}

Smarter HPO algorithms take more care with how they explore the hyper-parameter space.  For example,
\cite{bergstra2011algorithms,bergstra2012random} discuss methods to reduce the cost of hyper-parameter optimization via some clever random sampling. They demonstrate the effectiveness of random search under a limited computational budget, additionally showing that for most problems, only a few hyper-parameters matter, although that subset of important hyper-parameters is different across datasets. To introduce the prior state-of-the-art, we now discuss two more strategies to explore the hyper-parameter space more efficiently.

Bayesian optimization \cite{movckus1975bayesian} is a sequential model-based optimization algorithm that improves as it gathers more data. It involves a {\em surrogate modeling} for the objective function and an acquisition function that samples configurations.  Surrogates are a model learnt from just a few examples  which are used to assess all the examples.  
Surrogates   serve as a cheap   ``acquisition functions'' to find interesting examples
of what to evaluate next.

A widely-used surrogate method is Gaussian Process (GP) models \cite{williams2006gaussian}. Various acquisition functions have been proposed, such as Expected Improvement \cite{jones1998efficient}, Gaussian Process Upper Confidence Bound (GP-UCB) \cite{lai1985asymptotically}, and predictive entropy search \cite{hernandez2014predictive}. Among these, Expected Improvement is the most popular choice. Of the various BO-based methods, BOHB \cite{falkner2018bohb} combines Bayesian optimization and Hyperband\footnote{Briefly, Hyperband is based on ``successive halving'', a method that eagerly discards unpromising candidates at different computational budgets (so-called ``fidelities'')--hence the term ``multi-fidelity'' is used to describe it.} \cite{li2017hyperband}, and has emerged as the most popular surrogate-based algorithm combining these two due to it being open-source and state-of-the-art. The BO closely resembles TPE \cite{bergstra2011algorithms} with a single multi-dimensional kernel density estimate \cite{yu2020hyper}. In a large-scale study, \citet{eggensperger2021hpobench} endorsed multi-fidelity methods (such as BOHB) over their black-box counterparts in constrained budget settings. This result was corroborated by a more recent work by \citet{pfisterer2022yahpo}, which showed that in constrained budget settings, multi-fidelity methods outperformed full-fidelity methods such as SMAC \cite{lindauer2022smac3}. 


    Another strategy for searching in a more informed manner is \textit{differential evolution} (DE). This is an evolutionary method that maintains a population of candidate solutions at each generation, and uses three randomly chosen individuals to build a new candidate for the next generation. In fact, DE has already seen applications in the SE literature: \citet{majumder2018500+} showed that using DE to tune fast classical learners proved to be over 500 times faster than deep learning, \textit{while outperforming it}; and \citet{agrawal2018better} showed that using DE to tune SMOTE \cite{chawla2002smote} (an oversampling algorithm) can significantly improve the performance of many learning algorithms. DEHB \cite{awad2021dehb} uses DE on top of Hyperband. On top of improving its performance, it allows for a distributed tuning of hyper-parameters.

HPO is a very large area of research and this section has only sampled some of that work. Figure \ref{fig:hpo:graph} shows a  broad classification of HPO algorithms, along with a few notable examples.
For further details on HPO, see  \cite{shahriari2015taking,feurer2019hyperparameter,bischl2023hyperparameter}.  

Many studies from the AI literature endorse BOHB and DEHB as the current state-of-the-art~\cite{yu2020hyper, yang2020hyperparameter, feurer2019hyperparameter}. 
Hence, when we need to  compare {\IT} against some state-of-the-art AI HPO algorithm, we will use these two, along with random search as a simpler baseline.
 
\subsection{HPO in software analytics}\label{analytics}

  Here,
we offer details of some case studies with SE and HPO. The following list is hardly exhaustive, but it does detail some of
the more recent work in this area.  Also, this section serves
to introduce the examples used later in this paper.

\subsubsection{Defect prediction.}  Software quality assurance budgets are finite while assessment effectiveness increases exponentially with assessment effort \cite{fu2016tuning}. Therefore, standard practice is to apply slower methods on code sections that seem most  {\em interesting} (most critical or bug-prone).  Defect prediction is a fast way to learn these {\em interesting regions} using static code attributes.  A 2018 study \cite{wan2018perceptions} of 395 practitioners across 33 countries found that over 90\% were willing to adopt defect prediction models in their workflow.

As shown in Figure~\ref{tant}, many of the algorithms used in defect prediction can benefit from
hyper-parameter optimization.
Recently, \citet{yedida2021value} proposed ``fuzzy sampling'' (see Table \ref{what}) and combined it with stacked autoencoders, SMOTE, hyper-parameter optimization (using DODGE \cite{agrawal2019dodge}), and a weighted loss function to achieve state-of-the-art results on defect prediction using the PROMISE repository \cite{boetticher2007promise}, which contains 20 static code attributes of the software such as the depth of inheritance tree \cite{chidamber1994metrics}, and mean McCabe cyclomatic complexity \cite{mccabe1976complexity}. In this study, we use the same 6 datasets as that study for a fair comparison.

\subsubsection{Issue lifetime prediction.} Issue lifetime predictors learn predictors for how long it takes to address issues raised by other developers (or end-users). Predicting issue close time has multiple benefits for developers, managers, and stakeholders, including task prioritization and resource allocation. The task is interesting since in open source software, developers triage their work according to the difficulty of their work; i.e. they will explore many small tasks before tackling harder tasks that are slower to accomplish.

Hyper-parameter optimization has been successfully applied  to issue lifetime prediction.
For example, recently we studied the  DeepTriage \cite{mani2019deeptriage} system. This is a state-of-the-art-tool
that uses   a bidirectional LSTM with attention after removing special characters, tokenization, and pruning. That code base is complex and took eight hours to train a predictor. Recently we found~\cite{yedida2021old} that 
  DeepTriage can be defeated using    HPO and a forty-year old feed-forward neural net.
The reason for this superior performance was simple: DeepTriage 
and that older neural net need  eight hours and three seconds (respectively) to build a predictor. The faster algorithm is hence
(much) more suitable for multiple   hyper-parameter optimization studies.  Hence the conclusion of that study was that "old, fast, and simpler algorithms (plus HPO) can defeat new, slow, and supposedly smarter algorithms''. 

This paper uses the same data sets as that DeepTriage study; i.e.  data mined from Bugzilla for three popular software projects: Firefox, Chromium, and Eclipse. For simplicity, although the data contains user comments and system records (both of which are text data sources), we do not use those features in this study for our method. \citet{lee2020continual} proposed binning the data into $k \in \{2, 3, 5, 7, 9\}$ classes and test the performance of classifiers under the binary and multi-class settings. For brevity, we show results for 2 and 3 classes.

\subsubsection{Static code analysis false alarms}
Static analysis (SA) tools report errors in source code,
without needing to execute that code
(this makes them
very popular in industry).  SA tools    generate many false alarms that are ignored by practitioners. 
 Previous research work shows that   35\% to
91\% of SA warnings reported as bugs by SA tools 
are routinely
ignored by developers~\cite{heckman2009model,heckman2008establishing}. Hence many researchers add a machine learning post-processor that reports the ``alarms you should not ignore''
(learned from the historical log of  what errors
 do/do not result in developers changing the code)~\cite{wang2018there,yang2021learning,yedida2023find}.

Initial results from  \citet{yang2021learning} suggested that it was possible  to find the actionable static code warnings generated by a static analysis tool like FindBugs, using a very simple learner.
But later work by \citet{kang2022detecting} showed that the datasets used by the former study had errors, and once manually repaired, those simple methods  did not produce good results. This was later fixed by \citet{yedida2023find}, using 
hyper-parameter optimization that explored
combinations of   pre-processing methods.  While successful,
the methods of Yedida
et al. were somewhat of a hack.
The maths and methods of this paper
replaces that hack with an approach that is much more  well-founded approach (and which is simpler to apply).

\section{About {\IT}}\label{about}

The last section described some of the case studies we will explore (defect prediction, issue lifetime prediction, and static code analysis false alarms).
The  experiments of this paper check if there is any value in turning ``bumpy'' landscapes (like Figure~\ref{fig:loss}, left-hand-side) into ``smoother'' landscapes
(like
Figure~\ref{fig:loss}, right-hand-side).
To conduct such experiments, we used the  ``smoothing'' tool described in this section.

{\IT} begins by sampling a few    configurations of a learner, selected
at random from, say, Table~\ref{what}. Then, we compute the {\smoothness} for each configuration using the methods described in this section. 
\begin{algorithm}[!t]
    \SetAlgoLined
    \SetKwInOut{KwInput}{Input}
    \SetKwInOut{KwOutput}{Output}
    \KwInput{Number of configs to sample $N_1$, number of configs to run $N_2$. Defaults: $N_1 = 30, N_2 = 5$.}
    \KwOutput{A near-optimal configuration}
    $\mathcal{H}_0 \gets$ \textsc{Random}($\mathcal{H}, N_1$)\;
    $S \gets \phi$\;
    $P \gets \phi$\;
    \For{each config $h$ in $\mathcal{H}_0$}{
        S[config] $\gets$ \textsc{Get-Smoothness}(h)
    }
    \For{each config in \textsc{Top}(S, $N_2$)}{
        P[config] $\gets$ \textsc{Run}(config)\;
    }

    \KwRet{$\argmax P$}\;
    \caption{{\IT}}
    \label{alg:smoothie}
\end{algorithm}


Algorithm \ref{alg:smoothie} shows the pseudo-code for {\IT}. We first pick $N_1$ ($=30$) configurations at random. For each of those configurations, we compute the {\smoothness} and store it. Next,we pick the top $N_2$ ($=5$) of those configurations by {\smoothness} values, and run only those, getting the performance for each. The algorithm then return the best-performing configuration.

 The rest of this section shows the \texttt{Get-Smoothness} 
   calculation used in Algorithm~\ref{alg:smoothie}.
But before that, we make the following important points.
Our definition
of smoothness is general to many learners (See Equation~\ref{brief}). However,  that definition
must be realized in different ways for different learners.
\bi 
\item 
In some cases smoothness can be computed just by looking at certain properties of the data (in Gaussian Naive Bayes, that is the covariance matrix of the data, see Theorem~\ref{th:nb}).
In this  first case, HPO is very fast--just peek at the data to rank and prune the operators
of (say) Table~\ref{what}.
\item
In other cases, as in neural networks (see Theorem~\ref{th:ff}), smoothness changes by the inference process. In this latter case, we have to watch the learner running (at least for a limited time) before smoothness can be computed\footnote{This brings the weights closer to the final weights at an exponential rate, assuming the loss is also strongly convex.}.
\ei

The following assumes some familiarity with the math behind neural networks, and some comfort with linear algebra and multivariate calculus. Readers who are interested in the results can, with no loss of continuity, merely read the statements of the theorems (skipping the lemmas) and the last paragraphs of each subsection to understand how the smoothness is computed. Our code is also open-sourced, and some readers might find it easier to instead read the implementation.

\subsection{Preliminaries}
\label{sec:prelim}

Our description of {\IT} uses the following notation.

For any learner, $X, y$ will represent a training set containing multiple independent values $X$  and a single goal $y$. In neural networks, we also use $W$ to denote the weights on the links between neurons. 
We also say:
\bi 
\item $m$ is the number of training samples;
\item $n$ is the number of features;
\item $k$ is the number of classes, i.e. possible values for $y$;
\item  $E$ denotes the loss function.
\ei
For a feedforward neural network, $L$ represents the number of layers.
In that learner, at each layer, the following computation is performed:
\begin{align}
    z^{[l]} &= W^{[l]T}a^{[l-1]} + b^{[l]} \label{eq:zl} \\
    a^{[l]} &= g^{[l]}(z^{[l]}) \nonumber \\
    a^{[0]} &= X \nonumber
\end{align}
Here, $b^{[l]}$ is the bias vector at layer $l$, $W^{[l]}$ is the weight matrix at layer $l$, and $g^{[l]}$ is the activation function used at layer $l$ (we assume ReLU \mbox{$g(x) = \max(0, x)$}). The result from applying the activation function is called the activations at layer $l$, and is denoted by $a^{[l]}$.
In this text, $\nabla_W f$ refers to the first gradient of $f$, with respect to $W$. Finally, $\nabla^2_W f$ denotes the second gradient, also called the Hessian. We use $\sup f$ to denote the least upper bound (supremum) of $f$.



An interesting use case for the gradient is the $L-$Lipschitzness, which can be defined as $\sup \lVert \nabla_W f \rVert$\footnote{In fact, it can be shown that a differentiable function $f$ is $L-$Lipschitz iff $\lVert \nabla f \rVert \leq L$} . More strictly, a function $f$ is said to be $L-$Lipschitz if there exists a constant $L$ such that 
\begin{equation}
    \centering
    \frac{\lVert f(y) - f(x) \rVert}{\lVert y - x \rVert} \leq L
\end{equation}
for all $x, y$ in the domain of $f$. The Lipschitz constant gives an idea of the curvature bound of the function under review.

The second gradient (the Hessian) is denoted by $\mathcal{H} = \nabla^2 f$. In the strictest sense, the Hessian is a square tensor of second-order partial derivatives, and is symmetric (by Schwarz's theorem). A continuously differentiable function $f$ is said to be $\beta-$smooth if its \textit{gradient} is $\beta-$Lipschitz, or
\begin{equation}\label{brief}
\frac{\lVert \nabla f(y) - \nabla f(x) \rVert}{\lVert y - x \rVert} \leq \beta
\end{equation}
From the mean-value theorem,   there exists $v$ such that 
\[
    \frac{\lVert \nabla f(y) - \nabla f(x) \rVert}{\lVert y - x \rVert} = \lVert \nabla^2 f(v) \rVert \leq \sup\limits_v \lVert \nabla^2 f(v) \rVert
\]
(Note that this computation   assumes   the loss function is twice-continuously differentiable.)

Hence, $\sup\limits_v \lVert \nabla^2 f(v) \rVert$ is such a $\beta$\footnote{Since all norms are equivalent in finite dimension, the choice of the norm (such as Euclidean or Frobenius) of the Hessian is irrelevant for our application, and different configurations can be ordered by smoothness as long as the choice of norm remains consistent.}. Since $\beta$ is the least such constant, 
$\beta \leq \sup\limits_v \lVert \nabla^2 f(v) \rVert$. 

\changed{3a3.2}{
    Motivated by the unique properties of landscapes generated by SE data, {\IT} solves the following problem:

    $$
    \argmax\limits_{h \in \mathcal{H}} \max\limits_{x \sim \mathcal{X}} \left\lVert\nabla^2_{W^{[L]}} \mathcal{L}(f; x, h) \right\rVert
    $$

    Notice that choosing a specific hyper-parameter configuration and dataset materializes a specific loss function, whose smoothness we can compute. We are instead interested in a bound on the smoothness of all possible landscapes given a hyper-parameter configuration, which the inner $\max$ computes. The outer $\max$ searches across the hyper-parameter space itself in a randomized fashion.
}




\subsection{Smoothness for Feedforward Classifiers} 

This section shows the {\smoothness} computation for different learners. We prove the result for feedforward classifiers here; for other derivations (Naive Bayes, logistic regression), please see the proofs at the end of this paper.  

First, we will prove an auxiliary result that will support our main proof. Readers who are interested in the equation for the smoothness can skip ahead to Theorem \ref{th:ff}.

\begin{lemma}
    \label{lemma:partial:main}
    For a deep learner with ReLU activations in the hidden layers, the last two layers being fully-connected, and a softmax activation at the last layer, 
    \[
        \dfrac{\partial E}{\partial z_j^{[L]}} \leq \frac{k-1}{km} \left( \sum\limits_{i=1}^m [y^{(i)} = j] \right)
    \]
    under the cross-entropy loss.
\end{lemma}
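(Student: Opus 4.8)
The plan is to get the gradient of the cross-entropy loss with respect to the final-layer logits $z^{[L]}$ in closed form and then bound it. I would write the per-example loss as $E^{(i)} = -\sum_{c=1}^k [y^{(i)}=c]\,\log a_c^{(i)}$, where $a^{(i)} = \mathrm{softmax}(z^{[L]})$ on input $i$, so that $E = \frac{1}{m}\sum_{i=1}^m E^{(i)}$. The only real computation is a chain rule through the softmax: combining the softmax Jacobian $\partial a_c^{(i)}/\partial z_j^{[L]} = a_c^{(i)}(\delta_{cj} - a_j^{(i)})$ with $\partial E^{(i)}/\partial a_c^{(i)} = -[y^{(i)}=c]/a_c^{(i)}$, the $a_c^{(i)}$ cancel and, using $\sum_c [y^{(i)}=c] = 1$, the expression collapses to the standard identity
\[
\frac{\partial E}{\partial z_j^{[L]}} = \frac{1}{m}\sum_{i=1}^m \left( a_j^{(i)} - [y^{(i)}=j] \right).
\]
This part is routine and I would not belabor it; note it uses only that the last layer is softmax and never touches the ReLU hidden layers, which matter only for later steps (Theorem~\ref{th:ff}).

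The substance is turning this exact expression into the claimed upper bound. I would split the sum by label: examples with $y^{(i)}=j$ contribute $a_j^{(i)}-1\le 0$, while examples with $y^{(i)}\neq j$ contribute $a_j^{(i)}>0$. The factor $(k-1)/k$ is a strong hint: it is exactly $1-a_j$ at the uniform output $a^{(i)}=(1/k,\dots,1/k)$, which is where a softmax head sits near symmetric initialization. So I expect the intended argument to bound a single softmax coordinate (either $a_j^{(i)}\le (k-1)/k$ or $1-a_j^{(i)}\le (k-1)/k$) in the near-uniform regime, and to keep the result in the clean form $\frac{k-1}{km}\sum_i [y^{(i)}=j]$ so that Theorem~\ref{th:ff} can compose it with bounds on the remaining factor $\partial z_j^{[L]}/\partial W^{[L]}$.

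The hard part, and the step I would scrutinize most, is precisely this bounding, because the exact gradient above is \emph{not} dominated by $\frac{k-1}{km}\sum_i[y^{(i)}=j]$ for arbitrary logits: the surviving positive terms come from the $m-\sum_i[y^{(i)}=j]$ off-class examples, whereas the count in the stated bound is $\sum_i[y^{(i)}=j]$, and if class $j$ is rare the left-hand side can exceed the right-hand side (even at the uniform point, where the left side is $\approx 1/k$ while the right side is small). This mismatch tells me the lemma cannot be a purely algebraic identity and must lean on the operating regime of Algorithm~\ref{alg:smoothie} — consistent with the paper's own remark that for neural networks {\smoothness} is read off \emph{early} in training (within the first epoch), when the outputs are still close to uniform. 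My plan is therefore to (i) fix this regime explicitly, (ii) identify which family of terms the author actually retains so that the surviving count becomes exactly $\sum_i[y^{(i)}=j]$, (iii) bound each retained term by $(k-1)/k$ via the single-coordinate softmax bound, and (iv) divide by $m$. Pinning down step (ii) — reconciling the $\sum_i[y^{(i)}=j]$ count with the sign structure of the two groups of terms — is where I expect the genuine work to lie, and where any hidden assumption (balanced classes, a sign restriction, or keeping only the dominant negative contribution) would have to surface.
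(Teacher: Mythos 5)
Your computation of the exact gradient is correct, and so is your diagnosis: the stated bound does not follow from it without extra assumptions, and the paper's own proof reaches the bound by exactly the two moves you predicted. First, its chain rule in \eqref{eq:int:5:main} keeps only the diagonal term
$\frac{\partial E}{\partial a_j^{[L]}}\cdot\frac{\partial a_j^{[L]}}{\partial z_j^{[L]}} = -\frac{1}{m}\sum_i [y^{(i)}=j]\,(1-a_j^{[L]})$,
silently dropping the off-diagonal contributions
$\sum_{p\ne j}\frac{\partial E}{\partial a_p^{[L]}}\frac{\partial a_p^{[L]}}{\partial z_j^{[L]}} = \frac{1}{m}\sum_i [y^{(i)}\ne j]\,a_j^{[L]}$,
which are precisely the positive off-class terms you observed cannot be absorbed into a count of $\sum_i[y^{(i)}=j]$. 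Second, it evaluates the surviving factor $1-a_j^{[L]}$ at the uniform point $a_j^{[L]}=1/k$, asserted as ``the limiting case'' without making explicit the near-initialization regime you correctly inferred must be in play. So your step (ii) is answered: the retained family is the label-$j$ diagonal term, and the $(k-1)/k$ is your single-coordinate softmax bound at uniform output.

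The gap you flag is therefore real and lives in the paper, not in your reading of it. As you note, for the true gradient $\frac{1}{m}\sum_i\bigl(a_j^{(i)}-[y^{(i)}=j]\bigr)$ the claimed bound fails whenever class $j$ is rare: at uniform outputs the left side has magnitude about $\lvert 1/k - n_j/m\rvert$ with $n_j=\sum_i[y^{(i)}=j]$, while the right side is $\frac{k-1}{km}n_j$. There is also an internal inconsistency you could not have seen blind: the paper's combined expression \eqref{eq:mul:4:main} contains the factor $\sum_{h=1}^k \bigl(a_j^{[L]}-[h=j]\bigr) = k a_j^{[L]}-1$, which vanishes at $a_j^{[L]}=1/k$, so the stated conclusion does not actually follow from that equation; it follows only from the single $h=j$ term. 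A repaired lemma would either restrict to the label-$j$ (diagonal) component of the gradient, or assume near-uniform softmax outputs explicitly and accept a bound involving $\max(n_j,\,m-n_j)$ rather than $n_j$. Carrying your plan through honestly would surface exactly these required assumptions, which is the right outcome.
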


\begin{proof}
    We will use the chain rule, as follows:
    \begin{align}
    	\frac{\partial E}{\partial z^{[L]}_{j}} &= \frac{\partial E}{\partial a^{[L]}_j}\cdot \frac{\partial a^{[L]}_j}{\partial z^{[L]}_j} \label{eq:int:5:main}
    \end{align}
    Consider the Iverson notation version of the general cross-entropy loss:
    $$
    E(\textbf{a}^{[L]}) = -\frac{1}{m} \sum\limits_{i=1}^m \sum\limits_{h=1}^k [y^{(i)} = h] \log a_h^{[L]}
    $$

    Then the first part of \eqref{eq:int:5:main} is trivial to compute:
    \begin{equation}
        \frac{\partial E}{\partial a_j^{[L]}} = -\frac{1}{m} \sum\limits_{i=1}^m \frac{[y^{(i)}=j]}{a_j^{[L]}} \label{eq:mul:2:main}
    \end{equation}
    The second part is computed as follows.
    \begin{align} 
        \frac{\partial a^{[L]}_j}{\partial z^{[L]}_p} &= \frac{\partial}{\partial z^{[L]}_p} \left( \frac{e^{z^{[L]}_j}}{\sum_{l=1}^k e^{z^{[L]}_l}} \right) \nonumber \\ 
        &= \frac{[p = j] e^{z^{[L]}_j}\sum_{l=1}^k e^{z^{[L]}_l } - e^{z^{[L]}_j} \cdot e^{z^{[L]}_p} }{\left( \sum_{l=1}^k e^{z^{[L]}_l } \right)^2} \nonumber \\ 
        &= \frac{[p=j] e^{z^{[L]}_j }}{\sum_{l=1}^k e^{z^{[L]}_l }} - \frac{e^{z^{[L]}_j }}{\sum_{l=1}^k e^{z^{[L]}_l }} \cdot \frac{e^{z^{[L]}_p }}{\sum_{l=1}^k e^{z^{[L]}_l }} \nonumber \\ 
        &= \left([p=j] a^{[L]}_j - a^{[L]}_j a^{[L]}_p \right) \nonumber \\ 
        &= a^{[L]}_j([p=j]-a^{[L]}_p) \label{eq:mul:3:main}
    \end{align}

    Combining \eqref{eq:mul:2:main} and \eqref{eq:mul:3:main} in \eqref{eq:int:5:main} gives
    \begin{equation}
        \frac{\partial E}{\partial z^{[L]}_j} = \frac{1}{m} \left( \sum\limits_{i=1}^m [y^{(i)} = j] \right) \left( \sum\limits_{h=1}^k a^{[L]}_j - [h = j] \right)
        \label{eq:mul:4:main}
    \end{equation}
    As a prudence check, note that $E$ is a scalar and $z^{[L]}_j$ is a \textit{vector} (over all the samples); the right-hand side is also a vector (by vectorizing the Iverson term $[h=j]$). It is easy to show that the limiting case of this is when all softmax values are equal and each $y^{(i)}=p$; using this and $a^{[L]}_j = \frac{1}{k}$ in \eqref{eq:mul:4:main} and combining with \eqref{eq:int:5:main} gives us our desired result:
    \begin{equation}
        \sup\limits_j \norm{ \frac{\partial E}{\partial z^{[L]}_j} } = \frac{k-1}{km} \left( \sum\limits_{i=1}^m [y^{(i)} = j] \right)
    \end{equation}
\end{proof}

\begin{theorem}[Smoothness for feedforward networks]
    For a deep learner with ReLU activations in the hidden layers, the last two layers being fully-connected, and a softmax activation at the last layer, the {\smoothness} of the cross-entropy loss is given by
    \begin{equation}
    \label{eq:beta-ff:main}
        \sup \lVert \nabla^2_W E \rVert \propto \frac{k-1}{km} \sup \frac{\lVert a^{[L-1]} \rVert}{\lVert W^{[L]} \rVert} 
    \end{equation}
    \label{th:ff}
\end{theorem}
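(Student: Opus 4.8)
The plan is to bound $\sup\lVert\nabla^2_W E\rVert$ by reducing it to the last-layer quantities that Lemma~\ref{lemma:partial:main} already controls. Since \eqref{brief} identifies $\beta$-smoothness with the Lipschitz constant of the gradient, and since the mean-value argument in the preliminaries bounds this by $\sup_v\lVert\nabla^2 E(v)\rVert$, it suffices to differentiate the gradient of $E$ once more with respect to the weights and take a supremum. I would restrict attention to the final weight matrix $W^{[L]}$, since the softmax and the cross-entropy couple most directly to it and the statement is phrased in terms of $W^{[L]}$ and $a^{[L-1]}$.

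First I would write the first gradient by the chain rule, $\dfrac{\partial E}{\partial W^{[L]}} = \dfrac{\partial E}{\partial z^{[L]}}\,\dfrac{\partial z^{[L]}}{\partial W^{[L]}}$, and identify $\dfrac{\partial z^{[L]}}{\partial W^{[L]}} = a^{[L-1]}$ directly from the affine layer map \eqref{eq:zl}. This already exhibits $\nabla_W E$ as the outer product of the softmax-layer sensitivity $\partial E/\partial z^{[L]}$ with the incoming activations $a^{[L-1]}$, and Lemma~\ref{lemma:partial:main} supplies the bound $\sup_j\lVert\partial E/\partial z^{[L]}_j\rVert = \tfrac{k-1}{km}\big(\sum_i[y^{(i)}=j]\big)$ on the first factor.

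Next I would differentiate a second time. Because $z^{[L]}$ is \emph{affine} in $W^{[L]}$ and $a^{[L-1]}$ does not depend on $W^{[L]}$, the second derivative carries no curvature correction from the layer map itself: it factors as the outer product $a^{[L-1]}a^{[L-1]\top}$ tensored with the softmax-layer second derivative. Taking norms then produces a factor of $\lVert a^{[L-1]}\rVert$ from the Jacobian $\partial z^{[L]}/\partial W^{[L]}$ together with the Lemma~\ref{lemma:partial:main} bound, giving the prefactor $\tfrac{k-1}{km}$. To reach the stated ratio I would invoke the affine relation $z^{[L]} = W^{[L]\top}a^{[L-1]}$ once more, using $\lVert a^{[L-1]}\rVert \sim \lVert z^{[L]}\rVert/\lVert W^{[L]}\rVert$ to trade the remaining activation factor for $1/\lVert W^{[L]}\rVert$; taking the supremum over the layer then yields $\sup\lVert\nabla^2_W E\rVert \propto \tfrac{k-1}{km}\,\sup\,\lVert a^{[L-1]}\rVert/\lVert W^{[L]}\rVert$, which is why the result is asserted only up to proportionality.

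The hard part will be controlling the second-order softmax term and, relatedly, justifying the $1/\lVert W^{[L]}\rVert$ in the denominator: a fully literal second-derivative computation returns $\lVert a^{[L-1]}\rVert^2$ multiplied by a bounded softmax-Hessian factor, so recovering a single power of $\lVert a^{[L-1]}\rVert$ over $\lVert W^{[L]}\rVert$ hinges on the scaling identity for $\lVert a^{[L-1]}\rVert$ above, which is exact only up to constants when the pre-activations are comparably scaled. This is the semi-empirical ingredient of the argument, and it is also why the conclusion is a proportionality rather than an equality. Two further technical points need care: the mean-value step bounding $\beta$ by $\sup_v\lVert\nabla^2 E(v)\rVert$ requires the loss to be twice continuously differentiable, and the choice of matrix norm must be handled via norm equivalence, as the paper notes, so that configurations remain consistently ordered by {\smoothness}.
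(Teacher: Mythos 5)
Your argument reaches the stated formula, but it gets there by a genuinely different mechanism than the paper, and the difference is concentrated in how the $1/\lVert W^{[L]}\rVert$ appears. The paper's proof never forms the outer-product/softmax-Hessian factorization you describe. Instead it differentiates the Lemma~\ref{lemma:partial:main} bound $\nabla_W E \lesssim \frac{k-1}{km}\,C\,a_j^{[L-1]}$ once more, rewrites $\nabla_W a_j^{[L-1]}$ via the reciprocal-derivative trick $\frac{\partial a_j^{[L-1]}}{\partial E} = 1\big/\frac{\partial E}{\partial a_j^{[L-1]}}$ together with a second application of Lemma~\ref{lemma:partial:main} to $\frac{\partial E}{\partial w_{ij}}$, and then bounds $\frac{\partial E}{\partial a_j^{[L-1]}} = \frac{\partial E}{\partial z_j^{[L]}}\frac{\partial z_j^{[L]}}{\partial a_j^{[L-1]}} \leq \frac{k-1}{km}\lVert W^{[L]}\rVert$; the denominator $\lVert W^{[L]}\rVert$ comes from taking the reciprocal of that last bound. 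Your route — exploiting that $z^{[L]}$ is affine in $W^{[L]}$ so the Hessian factors as $a^{[L-1]}a^{[L-1]\top}$ tensored with the softmax second derivative, then trading one power of $\lVert a^{[L-1]}\rVert$ for $1/\lVert W^{[L]}\rVert$ via $\lVert a^{[L-1]}\rVert \sim \lVert z^{[L]}\rVert/\lVert W^{[L]}\rVert$ — is closer to the textbook Gauss--Newton decomposition and makes the structure of the curvature more transparent, but your scaling step is, read literally, an inequality in the wrong direction ($\lVert z^{[L]}\rVert \leq \lVert W^{[L]}\rVert\,\lVert a^{[L-1]}\rVert$ gives a \emph{lower} bound on $\lVert a^{[L-1]}\rVert$), and it also requires a bound on the softmax Hessian that Lemma~\ref{lemma:partial:main} does not supply. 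To be fair, the paper's own step of inverting an upper bound on $\partial E/\partial a_j^{[L-1]}$ suffers from exactly the same directionality issue, so the two derivations are of comparable (semi-empirical) rigor; you correctly identify your weak link as the place where the proportionality, rather than equality, enters.
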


\begin{proof}
Lemma \ref{lemma:partial:main} gives us (using the chain rule one step further):
\[
    \nabla_W E \leq \frac{(k-1)}{km} \left( \sum\limits_{i=1}^m [y^{(i)} = j] \right) a_j^{[L-1]}
\]

The $\left( \sum\limits_{i=1}^m [y^{(i)} = j] \right)$ term is constant, so we drop it as a proportionality constant (say, $C$). Therefore,
\[
\begin{aligned}
    \nabla^2_W E &\leq C\frac{k-1}{km} \nabla_W a_j^{[L-1]} & \\
     &= C\frac{k-1}{km} \dfrac{\partial a_j^{[L-1]}}{\partial E} \dfrac{\partial E}{\partial w_{ij}} & \\
     &\leq C\frac{(k-1)^2 a_j^{[L-1]}}{k^2m^2} \dfrac{\partial a_j^{[L-1]}}{\partial E} & \textit{using chain rule on Lemma 1} \\ 
     &= C\frac{(k-1)^2 a_j^{[L-1]}}{k^2m^2} \frac{1}{\dfrac{\partial E}{\partial a_j^{[L-1]}}} & \\
\end{aligned}
\]
Using $z^{[l]} = W^{[l]T}a^{[l-1]} + b^{[l]}$ and Lemma \ref{lemma:partial:main} gives us:
\[
\begin{aligned}
    \dfrac{\partial E}{\partial a_j^{[L-1]}} &= \dfrac{\partial E}{\partial z_j^{[L]}} \dfrac{\partial z_j^{[L]}}{\partial a_j^{[L-1]}} \leq \frac{(k-1)}{km} \lVert W^{[L]} \rVert
\end{aligned}
\]
from which \eqref{eq:beta-ff:main} follows.
\end{proof}

\textit{Note:} It is typical to use a regularization term in addition to the standard loss function above. In the proofs section (at the end of this paper), we show that the addition of the regularizer only adds a constant term to the smoothness, and as such, it can be disregarded.


To report all this maths in   another way, all that is required is to compute the norm of the weight matrix at the last layer. Of the $N_1$ configurations trained for one epoch, we choose the $N_2$ with the highest norms. This is because $k$ and $m$ are constant for a dataset, and so $\frac{k-1}{km}$ is simply a scaling factor.

\subsection{Smoothness for Logistic Regression}
\begin{corollary}[Smoothness of logistic regression]
    \[
        \beta = \frac{k-1}{km} \sup \frac{\lVert X \rVert}{\lVert W \rVert}
    \]
\end{corollary}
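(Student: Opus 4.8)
The plan is to obtain this corollary as a direct specialization of Theorem~\ref{th:ff}, exploiting the fact that multinomial (softmax) logistic regression is precisely a feedforward network with no hidden layers. First I would observe that logistic regression computes $z = W^T X + b$ followed by a softmax activation, trained under cross-entropy loss. This matches the last-layer computation $z^{[L]} = W^{[L]T} a^{[L-1]} + b^{[L]}$ of Equation~\ref{eq:zl} with a single layer $L = 1$, so that the penultimate ``activation'' is simply the raw input, $a^{[L-1]} = a^{[0]} = X$, and the only weight matrix is $W^{[L]} = W$.

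Next I would verify that the derivation underlying Theorem~\ref{th:ff} carries over unchanged to this degenerate architecture. The key point is that Lemma~\ref{lemma:partial:main} bounds $\partial E / \partial z_j^{[L]}$ using only the softmax output and the cross-entropy loss; it never invokes the ReLU hidden layers, so the ReLU hypothesis of the theorem is vacuous here and the bound $\sup_j \lVert \partial E / \partial z_j^{[L]} \rVert = \frac{k-1}{km}\left(\sum_{i=1}^m [y^{(i)}=j]\right)$ holds verbatim. Likewise, the chain-rule steps in the proof of Theorem~\ref{th:ff} only use the relation $z^{[L]} = W^{[L]T} a^{[L-1]} + b^{[L]}$, which remains valid with $a^{[L-1]} = X$.

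With this established, the result follows by substitution into Equation~\ref{eq:beta-ff:main}: replacing $a^{[L-1]}$ by $X$ and $W^{[L]}$ by $W$ yields
\[
    \beta \propto \frac{k-1}{km}\sup\frac{\lVert X \rVert}{\lVert W \rVert}.
\]
As in the parent theorem, the constant $\sum_{i=1}^m [y^{(i)}=j]$ is absorbed into the proportionality, so the stated equality should be read in the same up-to-constant sense that the paper uses to rank configurations by smoothness.

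I do not expect a serious obstacle here, since the computation is genuinely a specialization rather than a fresh argument. The only point requiring care is the bookkeeping of the degenerate indexing (single layer, $a^{[0]} = X$) and a clear statement that the ReLU and ``last two layers'' hypotheses of Theorem~\ref{th:ff} either hold trivially or are simply not needed when no hidden layers are present; once that is pinned down, the substitution is immediate.
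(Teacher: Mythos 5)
Your proposal is correct and follows essentially the same route as the paper, which simply notes that logistic regression is a feedforward network with no hidden layers and reads the result off Equation~\eqref{eq:beta-ff:main} with $a^{[L-1]} = X$ and $W^{[L]} = W$. Your additional checks (that the ReLU hypothesis is vacuous and that the constant is absorbed into the proportionality) are sound elaborations of the same one-line specialization.
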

\begin{proof}
    Logistic regression can be viewed as a feedforward network with no hidden layers.
Hence, the smoothness for logistic regression follows from Equation~\eqref{eq:beta-ff:main}.
\end{proof}

To say that another way, the smoothness computation for logistic regression is particularly easy: after training the learner, the practitioner simply computes the norm of the input data and the weights after that single epoch of training and uses them to compute the smoothness.



\subsection{Smoothness for Gaussian Naive Bayes}

For the proof of the following, please see the proofs section at the end of this paper.  

\begin{theorem}[Smoothness for Gaussian Naive Bayes]
    For Naive Bayes, the  {\smoothness} is given by
    \[
        \beta = \sup \lVert \boldsymbol\Sigma^{-1} \mathcal{A}G - \frac{1}{2} \boldsymbol\Sigma^{-1} \mathcal{A} \boldsymbol\Sigma^{-1} + G \mathcal{A} \boldsymbol\Sigma^{-1} \rVert
    \]
    where $\boldsymbol\Sigma$ is the covariance matrix, $\mathcal{A}$ is the fourth-order identity tensor, and $G = \boldsymbol \Sigma^{-1} \left( (\boldsymbol x - \boldsymbol \mu) (\boldsymbol x - \boldsymbol \mu)^T + \frac{1}{2} \boldsymbol \Sigma \right) \boldsymbol \Sigma^{-1}$.
    \label{th:nb}
\end{theorem}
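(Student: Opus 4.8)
The plan is to treat Gaussian Naive Bayes as a generative model whose loss $E$ is the negative log-likelihood of a multivariate Gaussian, and to compute its {\smoothness} as a bound on the second derivative with respect to the fitted covariance matrix $\boldsymbol\Sigma$. The appearance of $\boldsymbol\Sigma^{-1}$ and of a fourth-order tensor in the claimed formula signals that the relevant parameter is the matrix $\boldsymbol\Sigma$, not the scalar-indexed weights of the feedforward case. Concretely, for a single class the loss is
\[
    E(\boldsymbol\mu, \boldsymbol\Sigma) = \frac{1}{2}(\boldsymbol x - \boldsymbol\mu)^T \boldsymbol\Sigma^{-1} (\boldsymbol x - \boldsymbol\mu) + \frac{1}{2}\log\lvert\boldsymbol\Sigma\rvert + \text{const},
\]
and the goal is to obtain $\beta = \sup \lVert \nabla^2_{\boldsymbol\Sigma} E \rVert$, consistent with the general definition in Equation~\eqref{brief} and with the inner $\max$ over $\boldsymbol x$ in the problem statement.

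First I would compute the first gradient $\nabla_{\boldsymbol\Sigma} E$ using matrix differential calculus, relying on the two standard identities $d\log\lvert\boldsymbol\Sigma\rvert = \operatorname{tr}(\boldsymbol\Sigma^{-1}\,d\boldsymbol\Sigma)$ and $d\boldsymbol\Sigma^{-1} = -\boldsymbol\Sigma^{-1}(d\boldsymbol\Sigma)\boldsymbol\Sigma^{-1}$. Writing $\boldsymbol S = (\boldsymbol x - \boldsymbol\mu)(\boldsymbol x - \boldsymbol\mu)^T$, this yields a gradient of the form $\tfrac{1}{2}\boldsymbol\Sigma^{-1} - \tfrac{1}{2}\boldsymbol\Sigma^{-1}\boldsymbol S\boldsymbol\Sigma^{-1}$. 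The key bookkeeping step is to recognize the grouping $G = \boldsymbol\Sigma^{-1}\!\left(\boldsymbol S + \tfrac{1}{2}\boldsymbol\Sigma\right)\boldsymbol\Sigma^{-1}$ as the natural object in terms of which the second derivative closes; I would introduce $G$ at this stage precisely so that the Hessian can be written compactly.

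Next I would differentiate the gradient a second time with respect to $\boldsymbol\Sigma$, producing a fourth-order (matrix-to-matrix) derivative. Here the fourth-order identity tensor $\mathcal{A}$ plays the role of a placeholder for the perturbation direction $d\boldsymbol\Sigma$: an expression such as $\boldsymbol\Sigma^{-1}\mathcal{A}G$ denotes the linear map $d\boldsymbol\Sigma \mapsto \boldsymbol\Sigma^{-1}(d\boldsymbol\Sigma)G$. Applying the product rule to each $\boldsymbol\Sigma^{-1}$ factor in the gradient, and substituting $\boldsymbol\Sigma^{-1}\boldsymbol S\boldsymbol\Sigma^{-1} = G - \tfrac{1}{2}\boldsymbol\Sigma^{-1}$ to eliminate $\boldsymbol S$, the sandwiched terms should collapse into $\boldsymbol\Sigma^{-1}\mathcal{A}G - \tfrac{1}{2}\boldsymbol\Sigma^{-1}\mathcal{A}\boldsymbol\Sigma^{-1} + G\mathcal{A}\boldsymbol\Sigma^{-1}$. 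Taking $\beta = \sup$ of the norm of this tensor over the data then gives the claimed result, and, as in the feedforward case, the choice of norm is immaterial to the ranking of configurations.

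I expect the main obstacle to be the fourth-order tensor bookkeeping rather than any deep idea: correctly tracking the index structure of $d\boldsymbol\Sigma^{-1}/d\boldsymbol\Sigma$, respecting the symmetry of $\boldsymbol\Sigma$ (so that the symmetric part of the perturbation is the physically meaningful one), and verifying that the product-rule terms combine into precisely three terms with the stated coefficients. A secondary subtlety is confirming that the $G$-substitution is exactly what makes the expression close in self-referential form, and checking the $\tfrac{1}{2}$ constants carefully, since a factor-of-two slip is easy when passing between the quadratic-form and log-determinant contributions to the loss.
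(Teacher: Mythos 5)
Your overall strategy is the same as the paper's: treat the loss as the (negative) Gaussian log-likelihood, take matrix differentials of the quadratic-form and log-determinant terms using $d\boldsymbol\Sigma^{-1}=-\boldsymbol\Sigma^{-1}(d\boldsymbol\Sigma)\boldsymbol\Sigma^{-1}$ and $d\log\lvert\boldsymbol\Sigma\rvert=\operatorname{tr}(\boldsymbol\Sigma^{-1}d\boldsymbol\Sigma)$, name the resulting gradient $G$, and differentiate once more to get the three-term fourth-order Hessian $\boldsymbol\Sigma^{-1}\mathcal{A}G-\tfrac12\boldsymbol\Sigma^{-1}\mathcal{A}\boldsymbol\Sigma^{-1}+G\mathcal{A}\boldsymbol\Sigma^{-1}$. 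That second differentiation step in your plan is exactly the paper's computation of $dG$.

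There are, however, two concrete gaps. First, a coefficient problem: starting from the standard per-sample NLL $E=\tfrac12(\boldsymbol x-\boldsymbol\mu)^T\boldsymbol\Sigma^{-1}(\boldsymbol x-\boldsymbol\mu)+\tfrac12\log\lvert\boldsymbol\Sigma\rvert$, the gradient you compute is $\tfrac12\boldsymbol\Sigma^{-1}-\tfrac12\boldsymbol\Sigma^{-1}\boldsymbol S\boldsymbol\Sigma^{-1}=-\tfrac12\boldsymbol\Sigma^{-1}(\boldsymbol S-\boldsymbol\Sigma)\boldsymbol\Sigma^{-1}$, which is \emph{not} (any sign or scalar multiple of) the stated $G=\boldsymbol\Sigma^{-1}(\boldsymbol S+\tfrac12\boldsymbol\Sigma)\boldsymbol\Sigma^{-1}$: the relative weight of the $\boldsymbol S$ and $\boldsymbol\Sigma$ terms is wrong. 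The stated $G$ only drops out of the paper's normalization of the log-likelihood, in which the quadratic term carries coefficient $1$ against a $-\tfrac12\log\lvert\boldsymbol\Sigma\rvert$ term, so that $d\ell=\langle\boldsymbol w\boldsymbol w^T,d\boldsymbol\Sigma^{-1}\rangle-\tfrac12\langle\boldsymbol\Sigma^{-1},d\boldsymbol\Sigma\rangle=-\langle\boldsymbol\Sigma^{-1}(\boldsymbol w\boldsymbol w^T+\tfrac12\boldsymbol\Sigma)\boldsymbol\Sigma^{-1},d\boldsymbol\Sigma\rangle$. Your subsequent substitution $\boldsymbol\Sigma^{-1}\boldsymbol S\boldsymbol\Sigma^{-1}=G-\tfrac12\boldsymbol\Sigma^{-1}$ is consistent with the theorem's $G$ but inconsistent with the gradient your own loss produces, so the "recognize the grouping" step would fail as written; you must either adopt the paper's normalization or accept a different $G$. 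Second, you assume without argument that $\boldsymbol\Sigma$ is the only parameter that matters. The paper's proof spends roughly half its length showing that the second derivatives of $\ell$ with respect to the Bernoulli parameter $\phi$ and the class means $\boldsymbol\mu_i$ are bounded by constants, which is what licenses restricting the smoothness computation to the covariance matrix; your proof would need that reduction (or an explicit statement that $\beta$ is taken with respect to $\boldsymbol\Sigma$ only) to match the theorem as stated.
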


Notice that the smoothness is dependent only on the data $X$, means of the classes $\mu_i$, and the covariance matrix $\Sigma$), the latter two of which can also be derived from the dataset.

To say that another way, 
in practice, one would
\begin{itemize}
    \item Extract the means and covariance matrix from the dataset
    \item Compute $G$, which simply requires us to compute the inverse of the covariance matrix and the deviations of each data point from the mean for the corresponding class
    \item Use the intermediaries derived above to compute the smoothness.
\end{itemize}

It is worth noting that this is only true for \textit{Gaussian} Naive Bayes. This is because our proof relies on the fact that Gaussian Naive Bayes is a special case of Gaussian Discriminant Analysis; this is not true for multinomial Naive Bayes.



\subsection{Smoothness for other learners}
\label{sec:finitediff}

While it is possible to use the methodology used in this paper to derive the {\smoothness} for other classical learners, we leave that to future work. Right now we are experimenting with a non-parametric
form of all the above where some recursive cluster method, or a decision tree learner,
divides the data into a tree with small leaves. Then we report smoothness by computing the difference between neighbouring leaves and propagating that change up the tree using a center difference approximation:
\[
    (\nabla^2 E)(x) \approx \frac{E(x+h) - 2E(x) + E(x-h)}{h^2}
\]
where $E$ is an impurity measure and the approximation error is $\mathcal{O}(h^2)$.

Preliminary results are promising but, at this time, we have nothing definitive to report.

\begin{figure*}[!th]
    \centering
    \includegraphics[width=\textwidth]{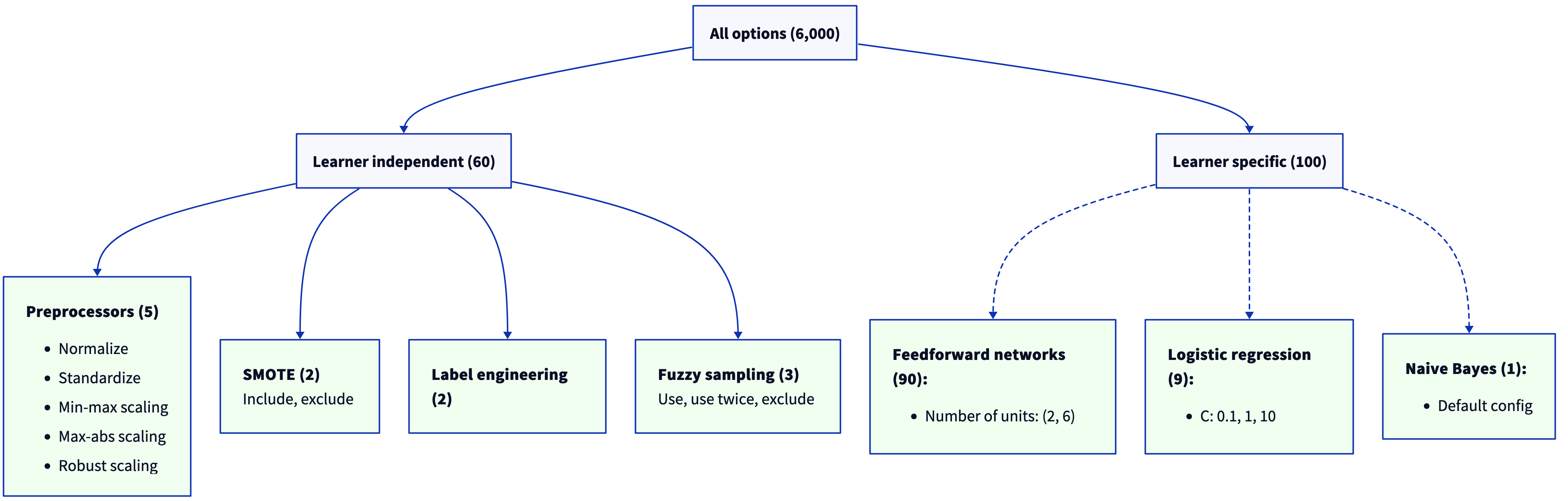}
    \caption{Summary of hyper-parameter space explored in this paper.}
    \label{fig:options-summary}
\end{figure*}

\begin{table}
    \centering
    \caption{Dataset sizes for each of the tasks in this study. Numbers on brackets denote number of SE papers we could find that  have explore these options (in a Google search of "hyper-parameter optimization" and "software engineering" 2014-2024).}
    \label{tab:data}
    { \scriptsize
    \begin{tabular}{llll}
        \toprule
        \multicolumn{3}{c}{Defect prediction \cite{yedida2021value}} \\
        \midrule
        \textbf{Dataset} & \textbf{\#Train (\%defects)} & \textbf{\#Test (\%defects)} & \textbf{\#Features} \\
        \midrule
        camel & 1819 (21) & 442 (64) & \multirow{6}{*}{20} \\
        ivy & 352 (22) & 352 (11) & \\
        log4j & 244 (29) & 205 (92) & \\
        synapse & 379 (20) & 256 (34) & \\
        velocity & 410 (70) & 229 (34) & \\
        xalan & 2411 (38) & 909 (99) & \\
        \midrule
        \multicolumn{3}{c}{Static code warnings \cite{yedida2023find}} \\
        \midrule
        \textbf{Dataset} & \textbf{\#Train} & \textbf{\#Test} & \textbf{\#Features} \\
        \midrule
        ant & 28 & 7 & 329 \\
        cassandra & 13 & 4 & 265 \\
        commons & 17 & 5 & 127 \\
        derby & 366 & 92 & 292 \\
        jmeter & 14 & 4 & 287 \\
        lucene-solr & 24 & 6 & 278 \\
        tomcat & 147 & 37 & 284 \\
        \midrule
        \multicolumn{3}{c}{Issue lifetime prediction \cite{yedida2021old,lee2020continual}} \\
        \midrule
        \textbf{Dataset} & \textbf{\#Train} & \textbf{\#Test} & \textbf{\#Features} \\
        \midrule
        Eclipse & 44,545 & 25,459 & \multirow{3}{*}{12} \\
        Firefox & 44,800 & 25,201 \\
        Chromium & 44,801 & 25,200 \\
        \midrule
        \multicolumn{3}{c}{Bayesmark datasets} \\
        \midrule
        \textbf{Dataset} & \textbf{\#Train} & \textbf{\#Test} & \textbf{\#Features} \\
        \midrule
        breast & 455 & 114 & 30 \\
        digits & 1438 & 359 & 64 \\
        iris & 120 & 30 & 4 \\
        wine & 142 & 36 & 13 \\
        diabetes & 354 & 88 & 10 \\
        \bottomrule
    \end{tabular}
    }
\end{table}

 \section{Experimental Rig}
\label{sec:experiments}

To test {\IT}, we apply the following rig.

\subsection{Data}

Table \ref{tab:data} describes the datasets used in this study. Clearly, our study encompasses a variety of data shapes: both small and large, and both wide and narrow. In particular, for static code warnings, the number of features varies by dataset because of the data collection process from FindBugs; see \cite{kang2022detecting} for a detailed report of the data cleaning process.

\subsection{Learners}
For our learners, we use the three for which we can currently
calculate the smoothness of their loss function; i.e.
feedforward networks, 
Naive Bayes, and logistic regression.

Bayesmark \cite{turner2021bayesian} is a benchmark suite used at the NeurIPS 2020 Black-Box Optimization Challenge. The datasets used in the challenge are available via OpenML\footnote{\url{https://openml.org}}, and are a part of the HPOBench \cite{eggensperger2021hpobench} benchmark. Bayesmark, as a benchmarking tool, uses a normalized scoring system that discourages HPO methods that perform close to random search.

\subsection{Hyper-Parameter Options}

Our hyper-parameter options come from Tables~\ref{tab:hyp:eg}-~\ref{what}, and are summarized in Figure \ref{fig:options-summary}.
We use these options since, as detailed in Table~\ref{what}, all those options have found favor with prior researchers. In Figure \ref{fig:options-summary}, branches with a solid line represent an AND, while dashed lines represent an OR. The learner-independent space has 60 options, and the learners can be configured in 100 ways. This yields 60 * 100 = 6,0000 options in total. For the experiments on the Bayesmark datasets, we use the hyper-parameter space used by Bayesmark, which for feedforward networks optimized by Adam, has a size of 134M options\footnote{\url{https://github.com/uber/bayesmark/blob/master/bayesmark/sklearn_funcs.py#L91-L101}}.
 
Fortunately, many of these options have similar effects.  As show below that within a random selection of 30, it is possible to find settings that significantly improve the performance of an off-the-shelf learner.  


\subsection{Hyper-Parameter Optimizers}
For all our datasets, we compare against BOHB, DEHB, random search, and the current state-of-the-art for those tasks. For {\IT}, we try all the learners whose {\smoothness} value we derived; we use the best-performing learner as the base for random search, BOHB, and DEHB (described above in \S\ref{aboutHPO}). In those trials, we  compared to BOHB after $N_1$ and $N_2$ evaluations, and DEHB after $N_1$ evaluations. This gives both methods the fairest chance.

We briefly mention that although DEHB can be run in a distributed fashion and reduce runtime that way, we instead compare with it in a simpler, single-machine setup. This is since {\IT} is also trivially parallelizable (since all trials in a stage are independent). We argue that this approach maintains fairness when comparing runtimes for all methods.

\subsubsection{Train and Test Sets}
For all experiments, we split the data into train and test partitions, and report results on the test set. We use the splitting policies seen in prior work:
\bi 
\item For defect prediction, it is standard practice to train on data from versions $1\ldots n-1$ and test on version $n$ \cite{agrawal2019dodge};
\item For issue lifetime prediction, we split the data 75/25;
\item For static code analysis, we split the data 80/20. 
\item Bayesmark splits the dataset into 80/20 internally.
\ei

\subsection{Evaluation Metrics}
For each kind of data set, we use evaluation measures seen in prior papers that explored that data:
\bi 
\item  For {\em  issue lifetime prediction},  as done in prior work~\cite{lee2020continual} the class variable was divided into $k\in\{2,3\}$ equal frequency bins. We use the same metrics as \cite{lee2020continual}, i.e. {\em accuracy}. If the 
true negative, false negative, false positive and true positive counts are $a,b,c,d$ 
then
\[
\mathit{accuracy} = (a+d) / (a+b+c+d)
\]
\item 
For {\em defect prediction}, as done in prior work~\cite{wang2016automatically, yedida2021value}, we used the F1
 evaluation metric
that commented not just on how many defects were were found but also, of the predicted defects, how many were true defects:
\[
\begin{array}{rcccl}
\mathrm{recall}        & = & r           & = d / (b+d)\\
\mathrm{precision}     & = & p           & = d / (c+d)\\
\mathrm{harmonic\; mean} & = & \mathit{F1} & = 2rp / (r+p)
\end{array}
\]
\item For {\em static code false alarms}, as done in prior work~\cite{yedida2023find}, we use
recall, precision, and false alarm:
\[
\mathrm{false\ alarm} = c/(a+c)
\]
\ei

To test the generality of {\IT} of a broader range of data sets, we also used the Bayesmark dataset, which was previously used at NeurIPS 2020 in their Black-Box Optimization Challenge. The Bayesmark score scales the model performance based on how well random search does, after normalizing the mean over multiple repeats. The normalized mean score first calculates the performance gap between observations and the global optimum and divides it by the gap between random search and the optimum.

\begin{table} 

\scriptsize
\begin{tabular}{lllrrr}
\toprule
\multicolumn{1}{l}{\textbf{Dataset}} & \textbf{Method (Learner)} & \textbf{\#evals} & \multicolumn{1}{l}{\textbf{pd}} & \multicolumn{1}{l}{\textbf{pf}} & \multicolumn{1}{l}{\textbf{precision}} \\
\midrule
\multirow{8}{*}{ant}                                    & Baseline (SVM)   & & 0 & 0.25 & 0 \\
\cmidrule{2-6}
                                                        & {\IT} (FF) & 10        & \gray{0.65}                   & \gray{0}                      & \gray{0.8}                           \\
                                                        & {\IT} (NB) & 10 & \gray{0.6} & \gray{0} & \gray{0.8} \\
                                                        & {\IT} (LR) & 10 & \gray{1} & \gray{0} & \gray{1} \\
\cmidrule{2-6}
                                                        & Random & 10         & 0.2 & 0.5 & 0.5 \\
                                                        & BOHB (FF) & 10           & \gray{1}                      & \gray{0.2}                    & 0.5                                    \\
                                                        & BOHB (FF) & 50           & \gray{0.6}                    & 1         & 0.6 \\
                                                        & DEHB (FF) & 50           & \gray{1}                      & 0.8    & 0.3 \\ 
\midrule
\multirow{8}{*}{cassandra}                              & Baseline (SVM) &  & 0.33 & 1 & 0.5 \\ 
\cmidrule{2-6}
                                                        & {\IT} (FF) & 10       & \gray{0.83}                   & \gray{0}                      & \gray{0.75}                          \\
                                                        & {\IT} (NB) & 10 & \gray{1} & 0.5 & \gray{0.83} \\
                                                        & {\IT} (LR) & 10 & \gray{1} & 0.67 & 0.33 \\
\cmidrule{2-6}
                                                         & Random & 10         & 0 & 0.66 & 0 \\
                                                        & BOHB (FF)   & 10         & \gray{1}                      & 0.67                            & 0.25                                   \\
                                                        & BOHB (FF)  & 50 & 0 & \gray{0} & 0 \\
                                                        & DEHB (FF)  & 50 & \gray{1} & 0.67 & \gray{0.75} \\
\midrule
\multirow{8}{*}{commons}                                & Baseline (SVM) &  & \gray{1} & 1 & \gray{0.6} \\
\cmidrule{2-6}
                                                        & {\IT} (FF)  & 10      & \gray{1}                      & \gray{0.25}                   & \gray{0.5}                           \\
                                                        & {\IT} (NB) & 10 & 0 & \gray{0} & 0 \\
                                                        & {\IT} (LR) & 10 & \gray{1} & 1 & \gray{0.8} \\
\cmidrule{2-6}
                                                     & Random & 10         & 0 & \gray{0} & 0 \\
                                                    & BOHB (FF)   & 10         & \gray{0.88}                   & 1                               & \gray{0.65}                          \\
                                                    & BOHB (FF) & 50 & \gray{1} & 1 & 0.2 \\
                                                    & DEHB (FF) & 50 & \gray{1} & 0.5 & 0.33 \\
                                                    
\midrule
\multirow{8}{*}{derby}                                  & Baseline (SVM) &  & 0.61 & 0.67 & 0.39 \\
\cmidrule{2-6}
& {\IT} (FF) & 10       & 0.74                   & 0.46                   & 0.44                                   \\
                                                        & {\IT} (NB) & 10 & \gray{0.90} & 0.60 & 0.4 \\
                                                        & {\IT} (LR) & 10 & 0.59 & \gray{0.06} & \gray{0.8} \\
\cmidrule{2-6}
                                                        & Random & 10         & \gray{1.0} & 1.0 & \gray{0.68} \\
                                                        & BOHB (FF) & 10           & \gray{0.93}                   & 0.84                   & \gray{0.68}                          \\
                                                        & BOHB (FF) & 50 & \gray{1} & 1 & \gray{0.68} \\
                                                        & DEHB (FF) & 50 & \gray{0.9} & 0.68 & 0.38 \\
\midrule
\multirow{8}{*}{jmeter}                                 & Baseline (SVM) &  & 0 & \gray{0} & 0 \\ 
\cmidrule{2-6}
                                                        & {\IT} (FF) & 10       & \gray{1}                      & \gray{0}                      & \gray{1}                             \\
                                                        & {\IT} (NB) & 10 & \gray{1} & 0.33 & 0.5 \\
                                                        & {\IT} (LR) & 10 & 0.67 & \gray{0} & 0.63 \\
\cmidrule{2-6}
                                                        & Random & 10         & 0 & 1 & 0 \\
                                                         & BOHB (FF)  & 10          & 0                               & \gray{0}                      & 0                                      \\
                                                         & BOHB (FF) & 50 & 0 & \gray{0} & 0 \\
                                                         & DEHB (FF) & 50 & \gray{1} & 0.33 & 0.5 \\
\midrule
\multirow{8}{*}{lucene-solr}                            & Baseline (SVM) &  & 0 & 1 & 0 \\ 
\cmidrule{2-6}
                                                        & {\IT} (FF) & 10       & \gray{0.5}                    & \gray{0.25}                   & \gray{0.5}                           \\
                                                        & {\IT} (NB) & 10 & \gray{1} & \gray{0.5} & \gray{0.5} \\
                                                        & {\IT} (LR) & 10 & \gray{1} & \gray{0.25} & \gray{0.67} \\
\cmidrule{2-6}
                                                        & Random & 10         & \gray{1} & \gray{0.25} & \gray{0.67} \\
                                                        & BOHB (FF)  & 10          & \gray{0.75}                   & \gray{0.5}                    & \gray{0.33}                          \\
                                                        & BOHB (FF) & 50 & 0 & 0 & 0 \\
                                                        & DEHB (FF) & 50 & \gray{1} & \gray{0.5} & \gray{0.5} \\
\midrule
\multirow{8}{*}{tomcat}                                 & Baseline (SVM) &  & 0.73 & 0.77 & 0.39 \\
\cmidrule{2-6}
                                                        & {\IT} (FF) & 10       & 0.14                            & \gray{0.09}                   & \gray{0.47}                          \\
                                                        & {\IT} (NB) & 10 & 0.43 & \gray{0.13} & \gray{0.63} \\
                                                        & {\IT} (LR) & 10 & 0.71 & 0.3 & \gray{0.61} \\
\cmidrule{2-6}
                                                        & Random & 10         & 0.43 & 0.5 & \gray{0.59} \\
                                                        & BOHB (FF)  & 10          & 0.7                             & 0.48                            & \gray{0.6}     \\
                                                        & BOHB (FF) & 50 & \gray{1} & 1 & \gray{0.62} \\
                                                        & DEHB (FF) & 50 & \gray{1} & 1 & 0.38 \\
\bottomrule
\end{tabular}
\caption{{\bf RQ1 results:}  Static code warnings results. Anything that ties with the statistically best results are shown in \fcolorbox{\graycolor}{\graycolor}{\textbf{bold}} (and such ties
are determined using the statistics of \S\ref{sec:stats}).
For Algorithm~\ref{alg:smoothie}, we use $N_1 = 50, N_2 = 10$. FF = Feedforward network. NB = Naive Bayes. LR = Logistic Regression. 
}
\label{tab:res:staticcode}
\end{table}
\subsection{Statistical Analysis}
\label{sec:stats}

To compare multiple groups of data, we use a combination of the Kruskal-Wallis test (as used in recent SE literature \cite{gopal2022peer, napier2023empirical}) and the Mann-Whitney U-test \cite{mann1947test} at a 5\% significance level. In particular, we first use the Kruskal-Wallis H-test to determine if there is a statistically significant difference between the multiple groups. If there is no significant difference, we report all treatments as indistinguishable; if there is, we run pairwise Mann-Whitney U-tests to find treatments that are not statistically significantly different from the highest-scoring treatment (using a significance level of 0.05). We apply the Benjamini-Hochberg correction to adjust p-values (as endorsed by \citet{farcomeni2008review}) of the Mann-Whitney U-tests.


\section{Results}
\label{sec:results}



\begin{table}[!b]
\centering

\scriptsize
\begin{tabular}{p{1.3cm}l l@{~}rr}
\toprule
\textbf{Dataset}                 & \textbf{Learner}   & \textbf{HPO method} & \multicolumn{1}{l}{\textbf{\#evals}} & \multicolumn{1}{l}{\textbf{Accuracy}} \\
\midrule
\multirow{8}{*}{eclipse (2)}  & \multicolumn{3}{l}{Baseline}                                       & 61                                    \\
\cmidrule{2-5}
                                 & Feedforward        & {\IT}          & 5                                    & 69.3                                  \\
                                 & Naive Bayes        & {\IT}          & 5                                    & 63                                    \\
                                 & Logistic           & {\IT}          & 5                                    & 62.5                                  \\
\cmidrule{2-5}
                                 & Best (Feedforward) & Random              & 5                                    & \gray{80.2}                         \\
                                 & Best (Feedforward) & BOHB                & 5                                    & 45.8                                  \\
                                 & Best (Feedforward) & BOHB                & 30                                   & 45.8                                  \\
                                 & Best (Feedforward) & DEHB                & 30                                   & 68.7 \\
\midrule
\multirow{8}{*}{firefox (2)}  & \multicolumn{3}{l}{Baseline}                                       & 67                                    \\
\cmidrule{2-5}
                                 & Feedforward        & {\IT}          & 5                                    & \gray{69}                           \\
                                 & Naive Bayes        & {\IT}          & 5                                    & 66.6                                  \\
                                 & Logistic           & {\IT}          & 5                                    & 66.4                                  \\
\cmidrule{2-5}
                                 & Best (Feedforward) & Random              & 5                                    & \gray{76.5}                         \\
                                 & Best (Feedforward) & BOHB                & 5                                    & 36                                    \\
                                 & Best (Feedforward) & BOHB                & 30                                   & 35.9                                  \\
                                 & Best (Feedforward) & DEHB                & 30                                   & \gray{72.4} \\
\midrule
\multirow{8}{*}{chromium (2)} & \multicolumn{3}{l}{Baseline}                                       & 63                                    \\
\cmidrule{2-5}
                                 & Feedforward        & {\IT}          & 5                                    & 62.2                                  \\
                                 & Naive Bayes        & {\IT}          & 5                                    & 63                                    \\
                                 & Logistic           & {\IT}          & 5                                    & 63.1                                  \\
\cmidrule{2-5}
                                 & Best (Logistic)    & Random              & 5                                    & \gray{67.1}                         \\
                                 & Best (Logistic)    & BOHB                & 5                                    & 44.5                                  \\
                                 & Best (Logistic)    & BOHB                & 30                                   & 44.4                                  \\
                                 & Best (Logistic)    & DEHB                & 30                                   & \gray{66.8} \\
\midrule
\multirow{8}{*}{eclipse (3)}  & \multicolumn{3}{l}{Baseline}                                       & 44                                    \\
\cmidrule{2-5}
                                 & Feedforward        & {\IT}          & 5                                    & \gray{71.2}                         \\
                                 & Naive Bayes        & {\IT}          & 5                                    & 51.6                                  \\
                                 & Logistic           & {\IT}          & 5                                    & 51.2                                  \\
\cmidrule{2-5}
                                 & Best (Feedforward) & Random              & 5                                    & 36.3                                 \\
                                 & Best (Feedforward) & BOHB                & 5                                    & 29.2                                  \\
                                 & Best (Feedforward) & BOHB                & 30                                   & 46.2                                  \\
                                 & Best (Feedforward) & DEHB                & 30                                   & \gray{71.2} \\
\midrule
\multirow{8}{*}{firefox (3)}  & \multicolumn{3}{l}{Baseline}                                       & 44                                    \\
\cmidrule{2-5}
                                 & Feedforward        & {\IT}          & 5                                    & \gray{69.2}                         \\
                                 & Naive Bayes        & {\IT}          & 5                                    & 49.5                                  \\
                                 & Logistic           & {\IT}          & 5                                    & 49.3                                  \\
\cmidrule{2-5}
                                 & Best (Feedforward) & Random              & 5                                    & 46.1                                  \\
                                 & Best (Feedforward) & BOHB                & 5                                    & 32.9                                  \\
                                 & Best (Feedforward) & BOHB                & 30                                   & 44.4                                  \\
                                 & Best (Feedforward) & DEHB                & 30                                   & \gray{70.5} \\
\midrule
\multirow{8}{*}{chromium (3)} & \multicolumn{3}{l}{Baseline}                                       & 43                                    \\
\cmidrule{2-5}
                                 & Feedforward        & {\IT}          & 5                                    & \gray{71.9}                        \\
                                 & Naive Bayes        & {\IT}          & 5                                    & 49.1                                  \\
                                 & Logistic           & {\IT}          & 5                                    & 49.8                                  \\
\cmidrule{2-5}
                                 & Best (Feedforward) & Random              & 5                                    & 53.5                                  \\
                                 & Best (Feedforward) & BOHB                & 5                                    & 27.5                                  \\
                                 & Best (Feedforward) & BOHB                & 30                                   & 44.6                                  \\
                                 & Best (Feedforward) & DEHB                & 30                                   & \gray{72.9} \\
\bottomrule
\end{tabular}
\caption{{\bf RQ1 results:} Issue lifetime prediction results.
For Algorithm~\ref{alg:smoothie}, we use $N_1,N_2 =30, 5$.
Anything that ties with the statistically best results are shown in \fcolorbox{\graycolor}{\graycolor}{\textbf{bold}} (and such ties
are determined using the statistics of \S\ref{sec:stats}).
Our baseline is DeepTriage \cite{mani2019deeptriage}. In the dataset column, the number of classes the data is binned into is in parentheses.
}
\label{tab:issue}
\end{table}
\begin{table}[!b]
\centering

\scriptsize
\begin{tabular}{lllrr}
\toprule
\textbf{Dataset}          & \textbf{Learner} & \textbf{HPO method} & \multicolumn{1}{l}{\textbf{\#evals}} & \multicolumn{1}{l}{\textbf{F1}} \\ 
\midrule
\multirow{8}{*}{camel}    & \multicolumn{3}{l}{Baseline (GHOST)}                                          & \gray{90.9}                  \\
\cmidrule{2-5}
                          & Feedforward      & {\IT}          & 5                                    & \gray{90.9}                   \\
                          & Naive Bayes      & {\IT}          & 5                                    & \gray{90.9}                   \\
                          & Logistic         & {\IT}          & 5                                    & \gray{90.9}                   \\
\cmidrule{2-5}
                          & Best (NB)        & Random              & 5                                    & \gray{90.9}               \\
                          & Best (NB)        & BOHB                & 5                                    & \gray{90.9}               \\
                          & Best (NB)        & BOHB                & 30                                   & \gray{90.9}               \\
                          & Best            & DEHB                  & 30                                  & 42.5 \\
\midrule
\multirow{8}{*}{ivy}      & \multicolumn{3}{l}{Baseline (GHOST)}                                          & 83.4                           \\
\cmidrule{2-5}
                          & Feedforward      & {\IT}          & 5                                    & \gray{94}                     \\
                          & Naive Bayes      & {\IT}          & 5                                    & \gray{94}                              \\
                          & Logistic         & {\IT}          & 5                                    & \gray{94}                     \\
\cmidrule{2-5}
                          & Best (NB)        & Random              & 5                                    & \gray{94}                 \\
                          & Best (NB)        & BOHB                & 5                                    & \gray{94}                 \\
                          & Best (NB)        & BOHB                & 30                                   & \gray{94}                 \\
                          & Best             & DEHB                & 30                                   & 31.2 \\
\midrule
\multirow{8}{*}{log4j}    & \multicolumn{3}{l}{Baseline (GHOST)}                                          & 65.9                           \\
\cmidrule{2-5}
                          & Feedforward      & {\IT}          & 5                                    & \gray{79.6}                            \\
                          & Naive Bayes      & {\IT}          & 5                                    & \gray{79.6}                   \\
                          & Logistic         & {\IT}          & 5                                    & \gray{79.6}                   \\
\cmidrule{2-5}
                          & Best (NB)        & Random              & 5                                    & \gray{79.6}               \\
                          & Best (NB)        & BOHB                & 5                                    & \gray{79.6}               \\
                          & Best (NB)        & BOHB                & 30                                   & \gray{79.6}               \\
                          & Best             & DEHB                & 30                                   & 12.7 \\
\midrule
\multirow{8}{*}{synapse}  & \multicolumn{3}{l}{Baseline (GHOST)}                                          & \gray{84.6}                  \\
\cmidrule{2-5}
                          & Feedforward      & {\IT}          & 5                                    & \gray{84.4}                            \\
                          & Naive Bayes      & {\IT}          & 5                                    & \gray{84.4}                   \\
                          & Logistic         & {\IT}          & 5                                    & \gray{84.4}                   \\
\cmidrule{2-5}
                          & Best (NB)        & Random              & 5                                    & \gray{84.4}               \\
                          & Best (NB)        & BOHB                & 5                                    & 74.5                        \\
                          & Best (NB)        & BOHB                & 30                                   & 0                           \\
                          & Best             & DEHB                & 30                                   & 56.2 \\
\midrule
\multirow{8}{*}{velocity} & \multicolumn{3}{l}{Baseline (GHOST)}                                          & 50.3                            \\
\cmidrule{2-5}
                          & Feedforward      & {\IT}          & 5                                    & 55.4                            \\
                          & Naive Bayes      & {\IT}          & 5                                    & \gray{79.8}                   \\
                          & Logistic         & {\IT}          & 5                                    & \gray{79.8}                   \\
\cmidrule{2-5}
                          & Best (NB)        & Random              & 5                                    & \gray{79.8}               \\
                          & Best (NB)        & BOHB                & 5                                    & \gray{79.8}               \\
                          & Best (NB)        & BOHB                & 30                                   & \gray{79.8}               \\
                          & Best             & DEHB                & 30                                   & 54.7 \\
\midrule
\multirow{8}{*}{xalan}    & \multicolumn{3}{l}{Baseline (GHOST)}                                          & 66                              \\
\cmidrule{2-5}
                          & Feedforward      & {\IT}          & 5                                    & \gray{68.3}                   \\
                          & Naive Bayes      & {\IT}          & 5                                    & \gray{68.3}                   \\
                          & Logistic         & {\IT}          & 5                                    & \gray{68.3}                   \\
\cmidrule{2-5}
                          & Best (Logistic)  & Random              & 5                                    & \gray{68.2}               \\
                          & Best (Logistic)  & BOHB                & 5                                    & 0                           \\
                          & Best (Logistic)  & BOHB                & 30                                   & \gray{68.3}               \\
                          & Best             & DEHB                & 30                                   & 45.5 \\
\bottomrule
\end{tabular}
\caption{{\bf RQ1 results:}  Defect prediction results.
For Algorithm~\ref{alg:smoothie}, we use $N_1,N_2$ =30, 5.
Anything that ties with the statistically best results are shown in \fcolorbox{\graycolor}{\graycolor}{\textbf{bold}}
(and such ties
are determined using the statistics of \S\ref{sec:stats}).
}
\label{tab:defect}
\end{table}
\begin{table} 
\centering
\caption{{\bf RQ1 results:}  Results on the NeurIPS 2020 Black-Box Optimization Challenge public datasets.
For Algorithm~\ref{alg:smoothie}, we use $N_1,N_2 =30, 5$.
Anything that ties with the statistically best results are shown in \fcolorbox{\graycolor}{\graycolor}{\textbf{bold}} (and such ties
are determined using the statistics of \S\ref{sec:stats}).
}
\label{tab:bbo}
\scriptsize
\begin{tabular}{llr}
\toprule
\textbf{Dataset}                 & \textbf{HPO method} & \multicolumn{1}{l}{\textbf{Score}} \\
\midrule
\multirow{5}{*}{breast} & Baseline (HEBO) & 90.84 \\
                        & {\IT}          & \gray{94.32}                     \\
                        & Random search       & 90.45                              \\
                        & BOHB                & \gray{92.37} \\
                        & DEHB                & 91.72 \\
\midrule
\multirow{5}{*}{digits} & Baseline (HEBO) & 95.24 \\
                        & {\IT}          & 89.51                                  \\
                        & Random search       & 87.39                              \\
                        & BOHB                & 91.50 \\
                        & DEHB                & \gray{96.87} \\
\midrule
\multirow{5}{*}{iris}   & Baseline (HEBO) & 78.27 \\
                        & {\IT}          & 87.64                    \\
                        & Random search       & 82.98                              \\
                        & BOHB                & \gray{92.30} \\
                        & DEHB                & 80.78 \\
\midrule
\multirow{5}{*}{wine}   & Baseline (HEBO) & 74.93 \\
                        & {\IT}          & \gray{93.34}                     \\
                        & Random search       & 76.59                              \\
                        & BOHB                & 81.68
                                \\
                        & DEHB                & 81.73 \\
\midrule
\multirow{5}{*}{diabetes} & Baseline (HEBO) & 98.67 \\
                        & {\IT}        & 98.65                                                  \\
                        & Random search       & 97.16                              \\                                      
                        & BOHB                & 97.86 \\
                        & DEHB                & \gray{99.40}
                                \\
\bottomrule
\end{tabular} 
\end{table}

 We ran the experiments for defect prediction and issue lifetime prediction on a university Slurm cluster. For static code analysis, we used a Google Cloud N2 machine with 4vCPUs and 23GB RAM. For the Bayesmark datasets, we ran the experiments on a Google Cloud C3 VM.

 Using those results, we explore three research questions:
\bi 
\item {\bf RQ1:} What are the benefits of smoothing (measured in metrics like F1, recall, etc)?
\item
{\bf RQ2:} What are the costs of smoothing (measured in CPU)?
\item
{\bf RQ3:}   Should we use
different kinds of algorithms for our SE data? Note that this third question addresses the issues raised in our introduction.
\ei
 
\subsection{RQ1: What are the benefits of smoothing?}
 
Table \ref{tab:res:staticcode} shows our results on {\bf static code analysis false alarms}. 
Recall that the task here was to recognize static code warnings that, in
the historical log, were ignored by developers.  In the table, gray denotes ``best results in this column, for that dataset``. Note that {\IT} (and more specifically, {\IT} (FF) has much more gray than anything else.
More specifically:
 all the non-{\IT} methods have far more white cells (i.e. non best) than the {\IT} methods.
(Table \ref{tab:results} shows the exact win/tie counts).

 Table \ref{tab:issue} shows our results on {\bf issue lifetime prediction}. Recall that the task here is to classify a GitHub issue according to how long it will take to resolve. As done in prior work \cite{lee2020continual}, for 3-class classification, the number of days taken to fix bugs is binned into 0 to 1 days, 2 to 5 days, and 5 to 100 days. For 2-class, the number of days taken to fix bugs is binned into less than (or greater than) 4 days for Firefox, less than (or greater than) 5 days for Chromium, and less than (or greater than) 6 days for Eclipse. Note that:
\bi
\item For two class prediction (eclipse(2), firefox(2), chromium(2)) {\IT} does not perform well (as shown in Table \ref{tab:results}, 1 tie and  2 losses). These datasets have the highest smoothness (i.e.,  sharpest minima) over all our SE datasets: the 2-class Chromium, Eclipse, and Firefox datasets had smoothness values of 9.4, 14.9, and 13.8 respectively. As  discussed in RQ3 (Figure \ref{fig:pbetter}), we   expect SMOOTHIE to perform poorly on datasets with such a high smoothness. This explains why {\IT} did not work well for the 2-class binned version of the data.
\item For three-class problems (eclipse(3), firefox(3), chromium(3)) , {\IT} wins in all cases.
\ei

\begin{table}[!t]
\centering
\caption{{\bf RQ1 results (summary):}  from 20 repeats over 24 data sets.}
\label{tab:results}
\scriptsize
\begin{tabular}{llll} 
\toprule
& \multicolumn{3}{c}{Frequency counts of:}\\
\textbf{Datasets}                 & \textbf{wins} & \textbf{ties} & \textbf{losses} \\
\midrule
Static code warnings & 4 & 3 & 0 \\
Issue lifetime prediction (binary) & 0 & 1 & 2 \\
Issue lifetime prediction (multi-class) & 0 & 3 & 0 \\
Defect prediction & 0 & 6 & 0 \\
Bayesmark & 2 & 1 & 2  \\
\bottomrule
\end{tabular}
\end{table}

Table \ref{tab:defect} shows     {\bf defect prediction} results. Recall that this task   imports static code features and outputs a prediction of where bugs might be found.
It turns out that defect prediction is our simplest task  (evidence: all methods achieve the same best ranked results); defect prediction may not be  a useful domain to compare   optimizers.

 
Table \ref{tab:bbo} shows our results for the NeurIPS 2020 Black-
Box Optimization Challenge public dataset. We include these results here just to
study how well {\IT} performs on non-SE data. 

 Table \ref{tab:results} summarizes   the {\bf RQ1} results from  
  Table~\ref{tab:res:staticcode} to Table~\ref{tab:bbo}.
   In that table,  two methods {\bf tie} if, they achieve the same statistical rank (using the methods of \S\ref{sec:stats}). Otherwise, a
{\bf win} is when one measure has a different rank to another {\em and} one measure has better median values (where ``better'' means ``less`` for false alarms and ``greater'' for other metrics).


In summary, from Table~\ref{tab:results}, we say:


\begin{formal}
{\bf RQ1 (answer)}: In the usual case, {\IT} rarely does worse and often
performs better than   the prior state of the art in hyper-parameter optimization.
\end{formal}

\begin{figure} 
    \centering
    \includegraphics[width=\linewidth]{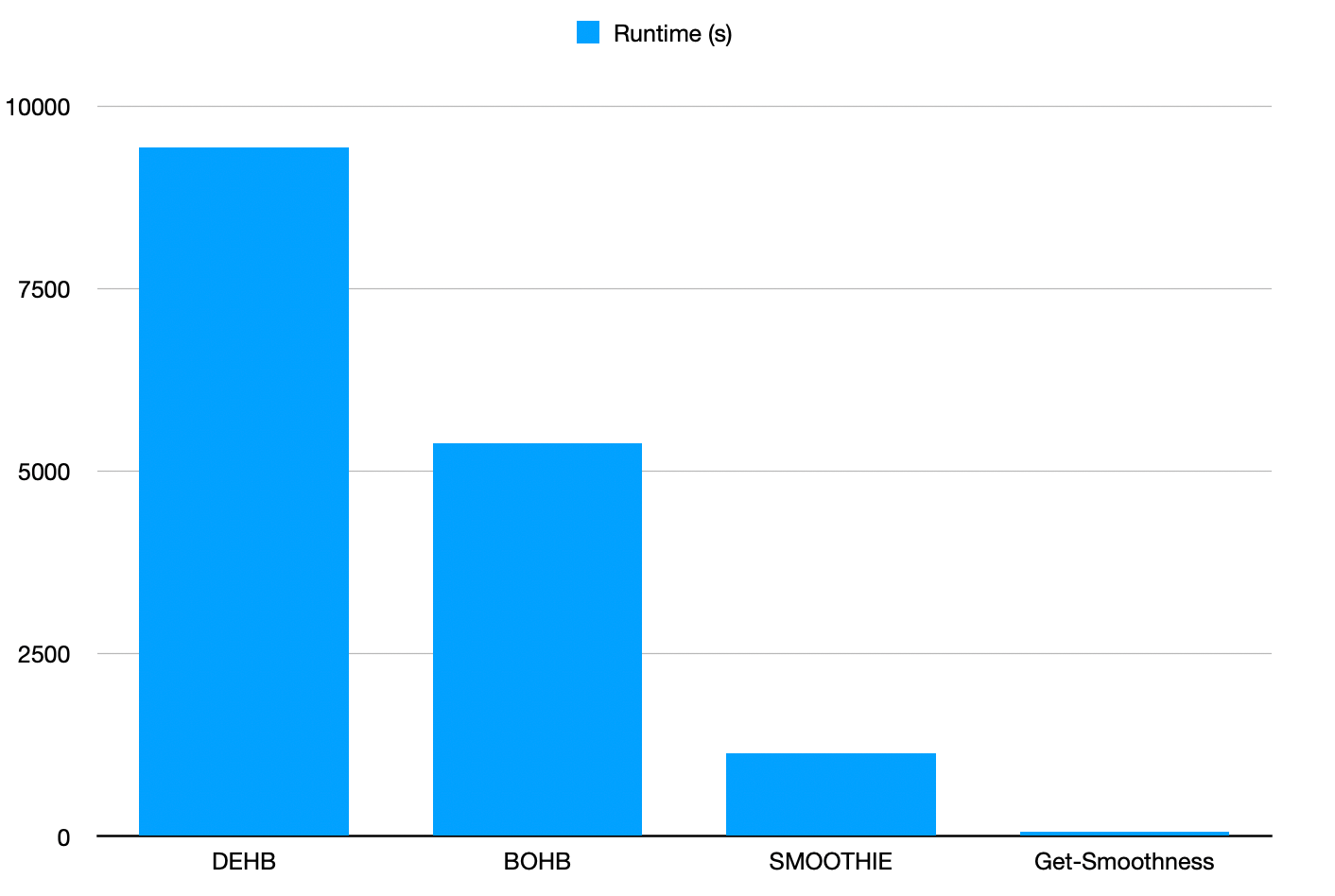}
    \caption{\textbf{RQ2 Results:} Runtime   on   chromium-2   dataset.}
    \label{fig:runtime}
\end{figure}

\subsection{RQ2: What are the costs of smoothing?}
Figure \ref{fig:runtime} shows the runtimes for our method. We take care to distinguish the overheads associated with finding smoothness, versus the time required for another process to use smoothness to make a recommendation about what are good hyperparameter options. All algorithms were measured on a Google Cloud n1-standard-8 VM with a P4 GPU.

As shown in Figure \ref{fig:runtime}, the time required to compute {\tt Get-Smoothness} smoothness is negligible.  Those results are then used by {\IT} for its decision making (and here, {\IT} repeats its processing $N_1+N_2=30+5$ times).  Even with those repeats, {\IT} is 4.7 times faster than BOHB and 8.3 times faster than DEHB.

Figure \ref{fig:runtime} suggests that further optimizations might be possible if {\IT} could
reduce how many repeats its executes. This is a matter for further work, in another paper.

Overall, we say: 
\begin{formal}
{\bf RQ2 (answer)}:   {\IT} works as well as the prior state of the art,  and does so
four times faster.  
\end{formal}

\begin{figure}[!t]
    \centering
        \includegraphics[width=\linewidth]{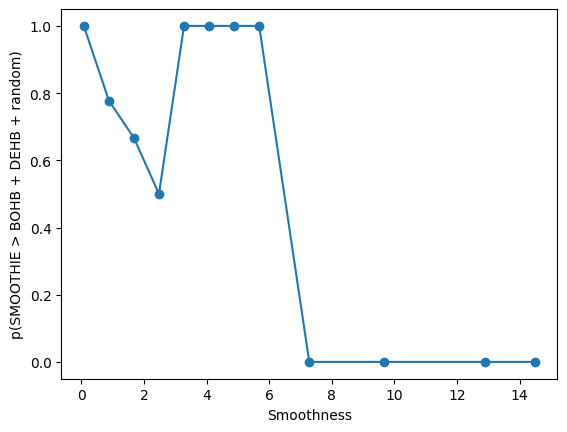}
        \caption{Probability of SMOOTHIE outperforming (or matching) BOHB, DEHB, and random as a function of smoothness}
        \label{fig:pbetter}  
\end{figure}

\subsection{RQ3: Should  we use
different kinds of algorithms for our SE data?}

Turning now to the  problem sketched   in our introduction,   recall
Figure~\ref{fig:beta-dist} where the mean smoothness of  SE, AI data was   2,5, respectively. 
Figure~\ref{fig:pbetter}, shows the  reports the probability of {\IT} outperforming BOHB's F1 score, measured as the fraction of datasets where
{\IT} outperforms BOHB\footnote{More precisely, $\forall i > 0, f(x_i) = \frac{\mathbbm{1}\{ (x_{i-1} \leq x \leq x_i) \land (\text{SMOOTHIE} > \text{BOHB}) \}}{\mathbbm{1}\{ (x_{i-1} \leq x \leq x_i) \}}$}

We note that, in those results,  different algorithms are the best choice
 for low and high smoothness values:
 \bi 
 \item {\IT} usually wins at the  low
 values typically associated with SE data;
\item 
While BOHB, DEHB usually wins at the higher values, seen only in the non-SE data.
\ei

\noindent 
Hence we say:
\begin{formal}
{\bf RQ3 (answer)}: SE data can be different to other kinds of data; and
those differences mean that we should use
different kinds of algorithms for our data.
\end{formal}

\subsection{Sensitivity Analysis}

We also conduct a sensitivity analysis over the parameters $N_1$ and $N_2$ on the static code analysis datasets (excluding maven, due to its smaller size). Figure \ref{fig:sensitivity} shows these results.

On the x-axis is the number of initial evaluations ($N_1$ in this paper) performed, and each line corresponds to one choice of full runs performed ($N_2$ in this paper). We start with a minimum of $N_1 = 20$, and go up to $N_1 = 75$, and test with $N_2 \in \{ 5, 10, 15 \}$. The y-axis shows the \textit{mean normalized regret} over evaluations, where \textit{lower is better}. This normalized regret is calculated as

$$
\text{Normalized regret} = 1-\frac{p_t}{p_{20}}
$$

where $p_t$ is the performance (measured as the difference between recall and false positive rate). We note that as $N_1$ increases, the normalized regret \textit{increases}. This is because $N_2$ stays fixed for each line, so the highest smoothness changes as $N_1$ increases. Further, we note that up to $N_1 \approx 45$, the normalized regret for $N_2 = 5$ is low (and lower than $N_2 = 15$). Past this point, the normalized regret is higher. However, it is worth noting that moving from $N_2 = 5$ to $N_2 = 10, 15$ incurs a 2-3x runtime penalty.

This figure shows that $N_1 = 30, N_2 = 5$ are reasonable default values for these parameters.

\section{Discussion}
\label{sec:discussion}

\subsection{ Why  Does {\IT} Work? }

\label{sec:why-smoothie}

Figure~\ref{fig:pbetter}  raises three questions:
\bi 
\item Why does BOHB fail for small $\beta$-smoothness values (i.e. the low $x$ values of     Figure~\ref{fig:pbetter})?
\item Why do those same flatter regions help {\IT}?
\item Why does {\IT} fail for for very bumpy regions (i.e. the high $x$ values of     Figure~\ref{fig:pbetter})?

\ei

\begin{figure}
    \centering
    \includegraphics[width=\linewidth]{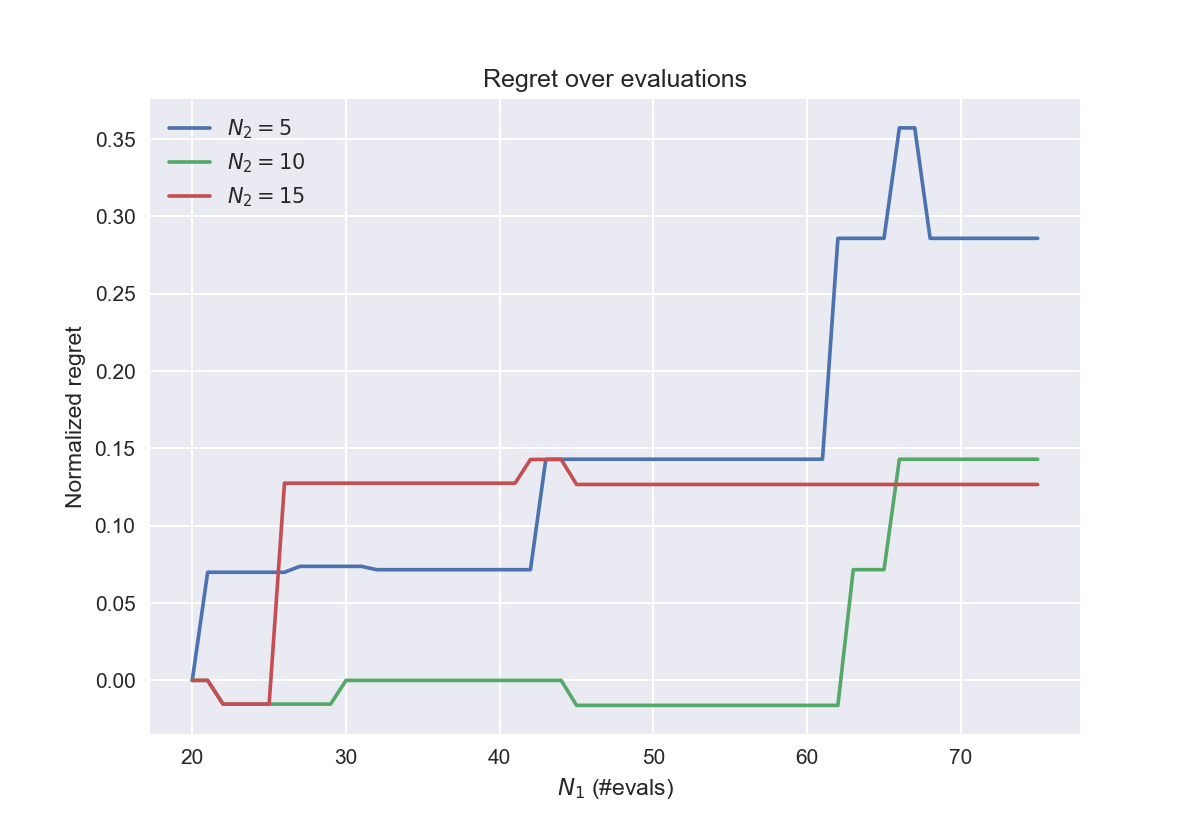}
    \caption{Sensitivity analysis for the static code analysis datasets, excluding maven.}
    \label{fig:sensitivity}
\end{figure}

We offer the hypothesis that algorithms like BOHB 
need a bumpy landscape to distinguish between different parts of the data, in order to decide what regions to explore, so BOHB prefers a bumpier space.

But if  BOHB   fails  in flat regions, why would {\IT} succeed in the same flatter regions?
 It is well-established that learning algorithms (such as SGD or Adam) have a trade-off between {\em training rates} and {\em generalization} \cite{feng2023activity, jastrzkebski2017three, li2022happens, cowsik2022flatter, keskar2017improving, wilson2017marginal, luo2018adaptive, zhou2020towards}.
 Steep landscape decrease training rates since at any point there is an obvious
 better place to jump to. But flatness helps generalization
 (i.e. scores on hold-out test data).
Many theorists~\cite{keskar2016large,jastrzkebski2017three,hochreiter1997flat,dziugaite2017computing,kaddour2022flat,wu2023implicit,jiang2019fantastic,neyshabur2017exploring,hinton1993keeping,chaudhari2019entropy,seong2018towards} comment  that the   flatter the  loss landscape, the more generalizable the conclusion (since for any given conclusion, little changes in   adjacent regions). To see this, consider the what happens when the loss landscape is {\em not } flat:  what is learned in one region might change significantly in the adjacent region.

In short, we believe that BOHB does not explicitly bias for generalization, only for test performance,
and {\IT} gets higher scores on test data by 
better exploiting   generalization over flatter terrains.

As to
why {\IT} fails for bumpy spaces, we say its  design trade-offs do not favor such a space, for the following reason.
 We conjecture that  {\IT}  flattens the loss function {\em by some ratio} of, say, 50\%. If this were true then when 
 $\beta$-smoothness has a large value  of, say, 5.6
 (see  Figure~\ref{fig:illust}.d),
{\IT} can only flatten the landscape to .5*5.6=2.8 (see  Figure~\ref{fig:illust}.c, which is still quite steep). On the other hand, if we start near the mean smoothness of our
SE data (see~Figure~\ref{fig:illust}.b) then a 50\% flattening takes us into a quite flat region (see Figure~\ref{fig:illust}.a) where 
models generalize better.

\begin{figure}[!t]
    \begin{subfigure}{.5\linewidth}
        \begin{center}\includegraphics[height=.7\linewidth]{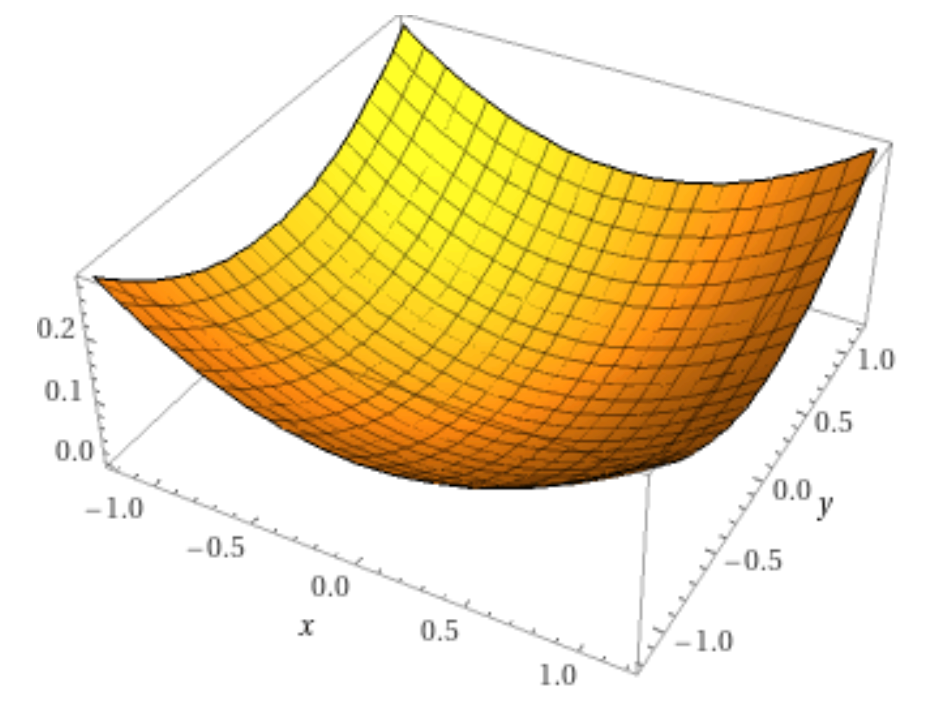}\end{center}
        \caption{$z = 0.3 (x^2 + y^2), \beta = 0.85$}
    \end{subfigure}
    \begin{subfigure}{.5\linewidth}
          \begin{center}\includegraphics[height=.7\linewidth]{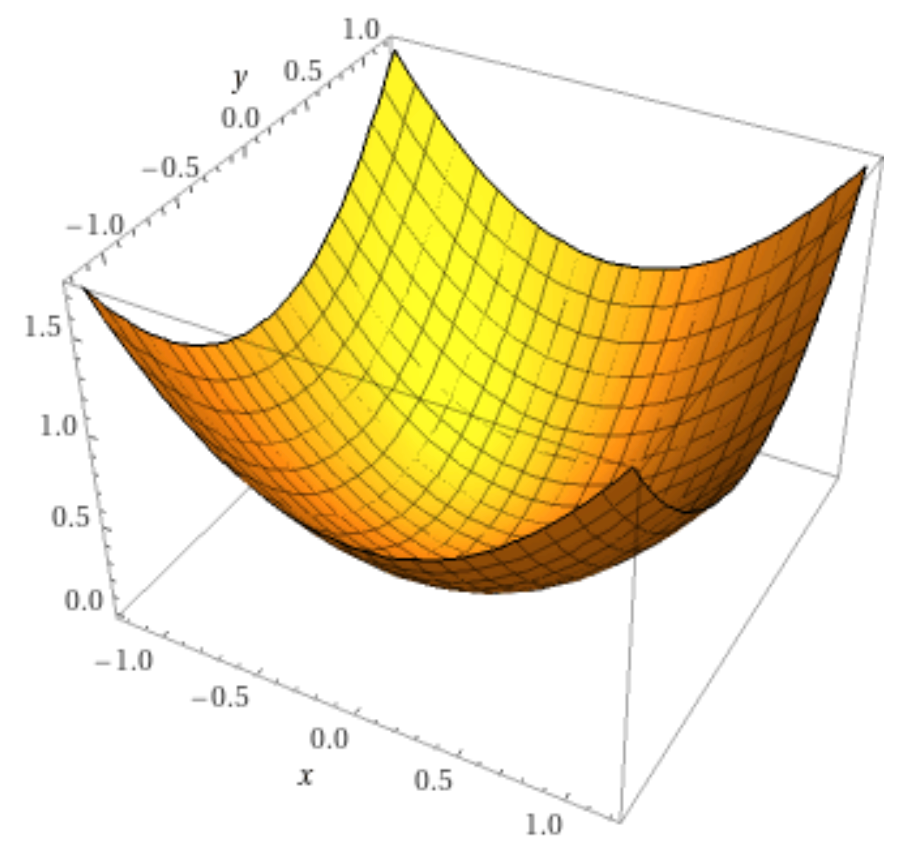}\end{center}
        \caption{$z = 0.6 (x^2 + y^2), \beta = 1.7$}
    \end{subfigure}
    \\
    \begin{subfigure}{.5\linewidth}
          \begin{center}\includegraphics[height=.7\linewidth]{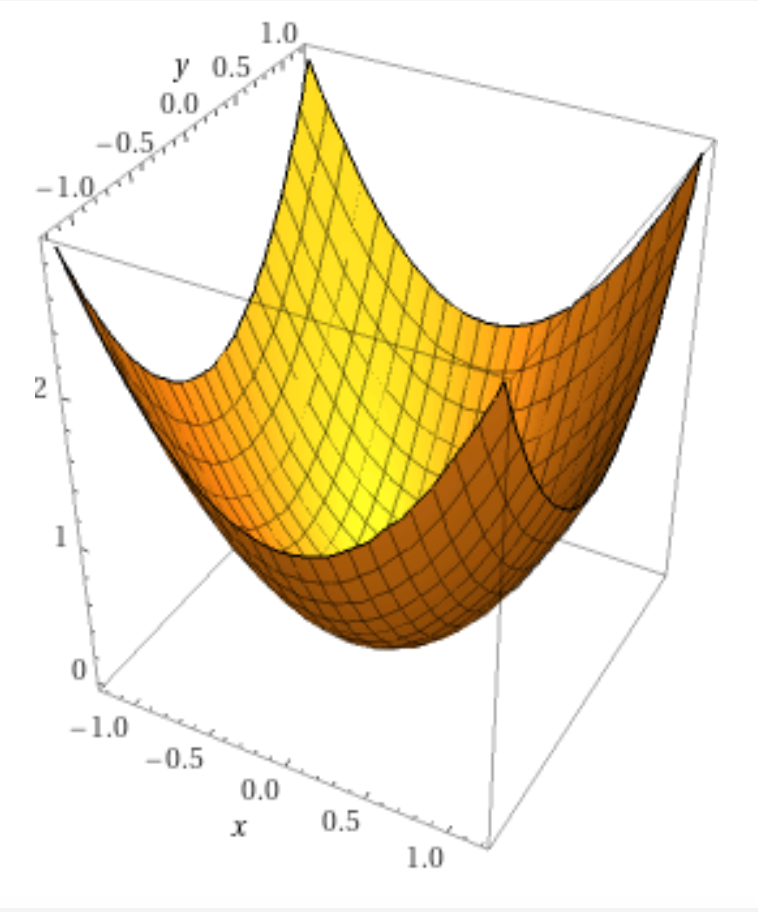}\end{center}
        \caption{$z = x^2 + y^2, \beta = 2.8$}
    \end{subfigure}
    \begin{subfigure}{.5\linewidth}
           \begin{center}\includegraphics[height=.7\linewidth]{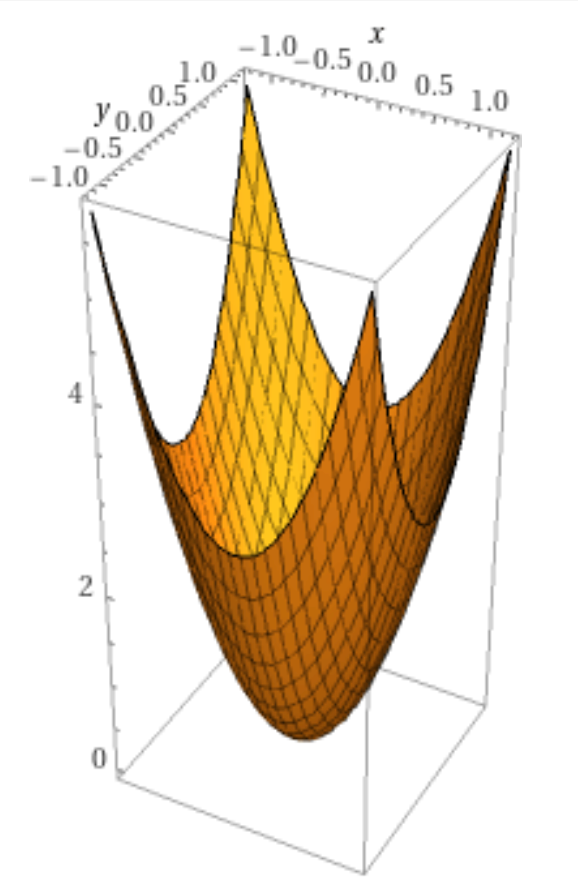}\end{center}
        \caption{$z = 2 (x^2 + y^2), \beta = 5.6$}
    \end{subfigure}
    \caption{An illustration of sharp minima forming with higher smoothness. Figure \ref{fig:illust}d is the upper limit of the sharpness that {\IT} can handle (see Figure \ref{fig:pbetter}).}  
    \label{fig:illust}
\end{figure}

\subsection{Threats to Validity}

In this section, we discuss various biases that might affect the final results, and how our experimental setup attempts to mitigate them.

\textbf{Sampling bias.} Sampling bias is an issue for experimental papers that perform evaluation on a limited set of data points. We believe this is mitigated by testing our method on 24 datasets against a state-of-the-art optimizer. Our code is open-source, and we invite other researchers to extend our results to more datasets.

\textbf{Evaluation bias.} Evaluation bias threatens papers that use a limited number of metrics to evaluate performance. It is important to ensure all comparisons are fair and provide a broad view of the performance. By using the same metrics as prior work to compare the methods in this paper, we demonstrate that our method can achieve state-of-the-art results across a variety of tasks and metrics.

\textbf{Learner bias.} Another source of bias is the learners used in the study. Currently, there are many algorithms for classification (which this paper focuses on). We developed our theory for three learners and tested them across all our datasets in an effort to mitigate this bias. In future work, we will implement the finite difference approximation (cf. Section \ref{sec:finitediff}) which open up the theory to arbitrary algorithms.

\textbf{Comparison bias.} There are many approaches for the tasks we studied in this paper, and it would be difficult to compare to all of them. As such, we decided to compare our work to the current state-of-the-art, as well as a state-of-the-art hyper-parameter optimization algorithm. This ensures that our results are compared against the best among the algorithms in the literature.

\textbf{Order bias.} Order bias has to do with the order of samples affecting performance. For defect prediction, we ensured that the test set is from a newer version than the training set (since that data comes with time stamps).   For all other datasets, we mitigate this bias by randomly splitting the dataset into train and test sets, and repeating our results 20 times.

\textbf{Statistical validity.} Especially with stochastic algorithms, it is important to ensure that any performance differences cannot be explained by noise. We repeat all our experiments 20 times and use adjusted p-values from a widely-used statistical test to declare wins, ties, and losses.

\section{Limitations}
\label{sec:limitations}

The major limitation of this work is the choice of measurement of smoothness, and the use of {\smoothness} as a heuristic. For example, having derived the Hessian, the condition number is an alternate measure of the ``nice nature'' of the landscape \cite{boyd2004convex}. 

Another clear limitation is that this approach is limited to learners for which a {\smoothness} can be derived. Especially in the case where the Hessian may not exist (e.g., if the loss function is not twice-differentiable such as the mean absolute error loss), this approach fails. Nevertheless, we have shown successful application in a classical learner that is not gradient-based, Naive Bayes. In that case, we used the negative log-likelihood as a surrogate for the loss function. 

\changed{3a1.1}{A further    limitation of this work is that because the smoothness computation is different for each type of learner, this can potentially make the implementation less trivial, especially if the practitioner is interested in trying multiple learners. Our reference implementation uses function overloading to choose the correct equations given the type of the classifier object. The white-box nature of our method means that practitioners need to select a list of classifiers \textit{a priori} and implement the corresponding smoothness equations.} 

Finally, we reiterate that {\IT} performs poorly on datasets with high smoothness, as evidenced by our results on the 2-class issue close time prediction results.

\section{Future Work}
\label{sec:futurework}

\changed{Ea3.1}{
This paper showed that for non-neural learners, we can use the negative log-likelihood as a surrogate for the loss (using Gaussian Naive Bayes). Our future work will expand this to other learners, such as tree-based learners, which have strong performance on tabular data \cite{grinsztajn2022tree}. These algorithms are based on maximizing the \textit{information gain} (synonymous with the KL divergence) from splitting nodes (which hold subsets of the dataset) on different features. Suppose node $N$ is split into nodes $\{ N_i \}_{i=1\ldots m}$. Then, the information gain is:

\[
    IG(N) = H(N) - \sum\limits_{i=1}^m \frac{\lvert N_i \rvert}{\lvert N \rvert} H(N_i)
\]

where $H(\cdot)$ is an \textit{entropy measure}. Most tree-based algorithms use either the \textit{Gini index} or the \textit{Shannon entropy} for $H$, but it is worth noting that both are special cases of Tsallis' entropy \cite{tsallis1988possible}. We can then use the negative log-likelihood, which is given by
\[
\ell := \text{NLL} = -2 \lvert E \rvert \cdot IG
\]
which follows since the log-likelihood (also called the G-statistic) is a scaled version of the KL divergence. However, since decision trees are non-parametric, we cannot use the Hessian-based smoothness measure; instead, we need to use a center-difference approximation for the smoothness:
\[
\beta(x) = \frac{\ell(x - h) + \ell(x + h) - 2\ell(x)}{h^2}
\]
where $x$ is some point, and $h$ is a small constant. Different methods of estimating the smoothness of the overall learner based on the above can be used, such as a Monte Carlo estimation.
}

Another avenue of future work would be replacing the random sampling at the beginning of the approach (Algorithm \ref{alg:smoothie}, Line 1) with a smarter selection approach. For example, Bayesian optimization could be used in a loop to search for configurations with high {\smoothness} values. Alternatively, one could use multiple metrics for smoothness, such as the condition number of the Hessian, and use a multi-objective optimization algorithm.

\section{Conclusion}
One of the black arts of software analytics is how
to configure our data mining systems.
Hyper-parameter optimization
is often implemented as a blind search through a large space of options (like randomly hitting a broken car with a hammer, until the car starts working).

This paper offers a  novel theory that can better guide
the design hyper-parameter optimizers:
\begin{quote}{\em Hyper-parameter optimization for SE is best viewed as a
“smoothing” function for the decision landscape since, after optimization, the loss function gets “smoother’
(i.e. on average, fewer changes in any local region).}\end{quote}
This theory predicts that methods that   maximize the smoothness should also improve the hyper-parameter optimization. This prediction was checked on  on 24 datasets across a variety of tasks. As predicted, {\IT}, achieves state-of-the-art performance and outperforms other popular HPO methods.
One reason to endorse this theory is  the time required to compute smoothness is negligible. 

 We show that a hyper-parameter optimizer called {\IT}, which is based on this theory:
 \bi 
 \item Works
 well for flatter landscapes;
 \item But, compared to standard AI tools like 
 BOHB, fails for much bumpier spaces.
 \item On the other hand the standard AI
 tools fail for the flatter spaces.
 \ei This is a significant observation since, as we shown in Figure~\ref{fig:beta-dist}, SE data can   be much flatter than the data usually processed by the AI tools. 

 Based on that results, we conclude that SE data can be different to other kinds of data; and
those differences means that, sometimes, we should use
different kinds of algorithms for our data. Hence it is {
\underline{\bf unwise}
to always use these AI algorithms  
in their default “off-the-shelf” configuration.

\section*{Acknowledgements}

This work was partially funded by an NSF grant \#1908762.

\small
\bibliographystyle{plainnat}
\bibliography{cite}

\begin{IEEEbiography}[{\includegraphics[width=.9in,clip,keepaspectratio]{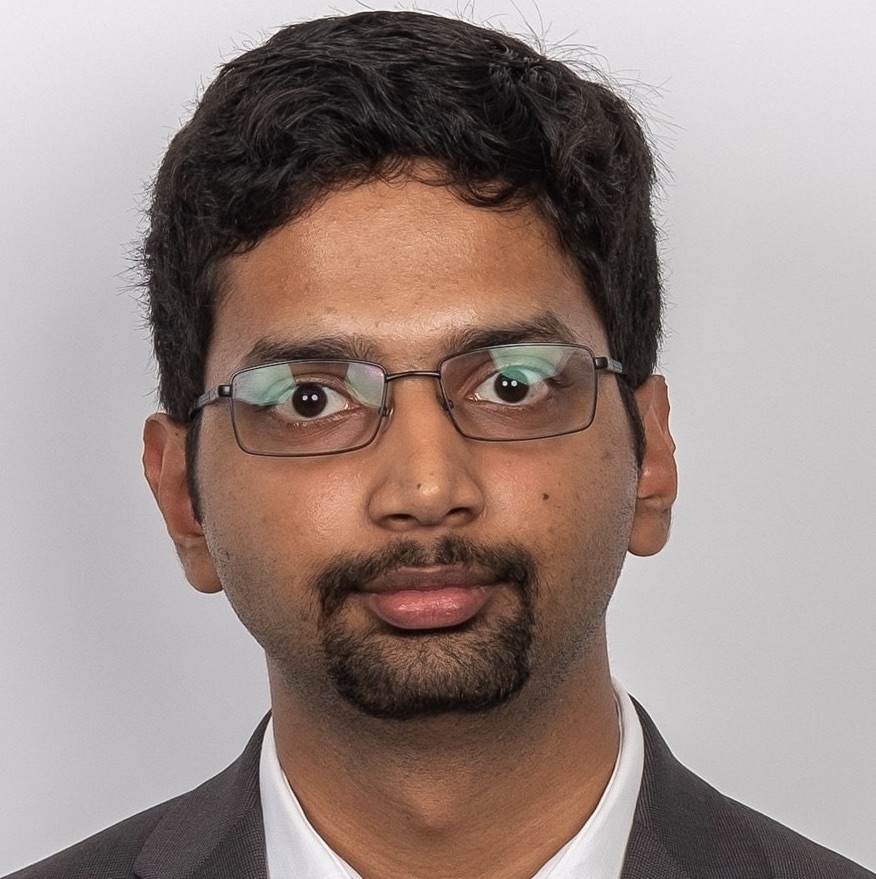}}] {Rahul Yedida} earned his PhD in Computer Science at NC State University and works as a Senior Data Scientist at LexisNexis. His research interests include automated software engineering and machine learning. For more information, please see \url{https://ryedida.me}.
\end{IEEEbiography}
\vspace{-20mm}
\begin{IEEEbiography}[{\includegraphics[width=.9in,clip,keepaspectratio]{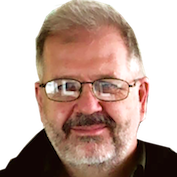}}]{Tim Menzies}  (ACM Fellow, IEEE Fellow, ASE Fellow, Ph.D., UNSW, 1995) is a full Professor in Computer Science at North Carolina State University. He is the director of the   Irrational Research Lab (mad scientists r'us) and the author of over 300 publications (refereed) with 24,000 citations and an h-index of 74. He has graduated 22 Ph.D. students, and has been a lead researcher on projects for NSF, NIH, DoD, NASA, USDA  and private companies (total funding of \$19+ million). Prof. Menzies is the editor-in-chief of the Automated Software Engineering journal and associate editor of TSE and other leading SE journals. For more, see  \url{https://timm.fyi}.
\end{IEEEbiography}

\section*{Proofs}
\label{sec:proofs}

\begin{theorem}[Smoothness for Gaussian Naive Bayes]
    For Naive Bayes, the {\smoothness} is given by
    \[
        \beta = \sup \lVert \boldsymbol\Sigma^{-1} \mathcal{A}G - \frac{1}{2} \boldsymbol\Sigma^{-1} \mathcal{A} \boldsymbol\Sigma^{-1} + G \mathcal{A} \boldsymbol\Sigma^{-1} \rVert
    \]
    where $\boldsymbol\Sigma$ is the covariance matrix, $\mathcal{A}$ is the fourth-order identity tensor, and $G = \dfrac{\partial\ell}{\partial\boldsymbol\Sigma}$, where $\ell$ is the log-likelihood.
\end{theorem}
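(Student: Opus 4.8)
The plan is to exploit the remark from the main text that Gaussian Naive Bayes is a special case of Gaussian Discriminant Analysis, so that the object whose curvature we must bound is the (negative) log-likelihood of a multivariate Gaussian, viewed as a function of the covariance matrix $\boldsymbol\Sigma$. Here $\boldsymbol\Sigma$ plays the role that the weight matrix $W$ played in Theorem~\ref{th:ff}: it is the learned parameter, and the smoothness we want is $\beta = \sup \lVert \nabla^2_{\boldsymbol\Sigma}\,\ell \rVert$. So first I would write down $\ell = -\tfrac12 (\boldsymbol x-\boldsymbol\mu)^T\boldsymbol\Sigma^{-1}(\boldsymbol x-\boldsymbol\mu) - \tfrac12\log\lvert\boldsymbol\Sigma\rvert + \text{const}$ and record that, because the Naive Bayes assumption only forces $\boldsymbol\Sigma$ to be diagonal, the derivatives computed for the full-covariance case specialize without change.

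The second step is the first-order derivative $G = \partial\ell/\partial\boldsymbol\Sigma$. Using the standard matrix-calculus identities $\partial \log\lvert\boldsymbol\Sigma\rvert/\partial\boldsymbol\Sigma = \boldsymbol\Sigma^{-1}$ and $\partial\big[(\boldsymbol x-\boldsymbol\mu)^T\boldsymbol\Sigma^{-1}(\boldsymbol x-\boldsymbol\mu)\big]/\partial\boldsymbol\Sigma = -\boldsymbol\Sigma^{-1}(\boldsymbol x-\boldsymbol\mu)(\boldsymbol x-\boldsymbol\mu)^T\boldsymbol\Sigma^{-1}$, I would collect terms to recover exactly the form $G = \boldsymbol\Sigma^{-1}\big((\boldsymbol x-\boldsymbol\mu)(\boldsymbol x-\boldsymbol\mu)^T + \tfrac12\boldsymbol\Sigma\big)\boldsymbol\Sigma^{-1}$ quoted in the statement of Theorem~\ref{th:nb}. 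This reconciles the definition $G = \partial\ell/\partial\boldsymbol\Sigma$ used in this proofs section with the earlier closed form.

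The third and crucial step is to differentiate $G$ a second time with respect to $\boldsymbol\Sigma$. Since $G$ is matrix-valued and the argument is a matrix, the Hessian is a fourth-order tensor, and the index bookkeeping is where essentially all the difficulty lies. The governing identity is $\partial\boldsymbol\Sigma^{-1}/\partial\boldsymbol\Sigma = -\boldsymbol\Sigma^{-1}\otimes\boldsymbol\Sigma^{-1}$ (componentwise $-(\boldsymbol\Sigma^{-1})_{ik}(\boldsymbol\Sigma^{-1})_{lj}$), while differentiating the explicit $\boldsymbol\Sigma$ sitting inside the bracket of $G$ is what introduces the fourth-order identity tensor $\mathcal{A}$. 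Writing $G = \boldsymbol\Sigma^{-1}(\,\cdot\,)\boldsymbol\Sigma^{-1}$ and applying the product rule across the two outer $\boldsymbol\Sigma^{-1}$ factors together with the bracket produces three contributions that, after contracting indices through $\mathcal{A}$, assemble into $\boldsymbol\Sigma^{-1}\mathcal{A}G$, $-\tfrac12\boldsymbol\Sigma^{-1}\mathcal{A}\boldsymbol\Sigma^{-1}$, and $G\mathcal{A}\boldsymbol\Sigma^{-1}$.

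Finally I would take the tensor norm of this object and the supremum over the data $\boldsymbol x$ (with $\boldsymbol\mu$ and $\boldsymbol\Sigma$ held fixed as data-derived quantities), which yields $\beta$ as stated. I expect the main obstacle to be not any single derivative but the consistent tensor-index accounting in the second differentiation --- keeping straight which pair of indices each copy of $\mathcal{A}$ contracts, and verifying that the symmetry of $\boldsymbol\Sigma$ does not disturb the clean three-term form. A secondary point requiring care is justifying that the negative log-likelihood is a legitimate surrogate ``loss'' for a non-gradient learner, which is precisely where the reduction to Gaussian Discriminant Analysis is essential.
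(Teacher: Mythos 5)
Your route is the same as the paper's: reduce Gaussian Naive Bayes to Gaussian Discriminant Analysis, take the first matrix derivative of the log-likelihood in $\boldsymbol\Sigma$ to get $G$, differentiate $G$ once more via the product rule on $\boldsymbol\Sigma^{-1}(\cdot)\boldsymbol\Sigma^{-1}$ together with $d\boldsymbol\Sigma^{-1}=-\boldsymbol\Sigma^{-1}\,d\boldsymbol\Sigma\,\boldsymbol\Sigma^{-1}$, and read off the three-term fourth-order tensor whose norm-supremum is $\beta$. The paper does this with Frobenius-inner-product differentials rather than explicit index bookkeeping, but that is a notational difference, not a different argument.

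Two caveats. First, the theorem implicitly claims that the $\boldsymbol\Sigma$-block is \emph{the} relevant curvature, and the paper spends roughly half its proof justifying this by showing that the second derivatives of $\ell$ with respect to the Bernoulli parameter $\phi$ and the class means $\boldsymbol\mu_0,\boldsymbol\mu_1$ are constant (so they cannot differentiate configurations); your proposal holds $\boldsymbol\mu$ fixed by fiat and never mentions $\phi$, which leaves that part of the statement unsupported. Second, your step two will not "recover exactly" the quoted $G=\boldsymbol\Sigma^{-1}\bigl((\boldsymbol x-\boldsymbol\mu)(\boldsymbol x-\boldsymbol\mu)^T+\tfrac12\boldsymbol\Sigma\bigr)\boldsymbol\Sigma^{-1}$ from the standard log-likelihood you wrote down: with the quadratic term entering as $-\tfrac12(\boldsymbol x-\boldsymbol\mu)^T\boldsymbol\Sigma^{-1}(\boldsymbol x-\boldsymbol\mu)$ you get $G=\tfrac12\boldsymbol\Sigma^{-1}\bigl((\boldsymbol x-\boldsymbol\mu)(\boldsymbol x-\boldsymbol\mu)^T-\boldsymbol\Sigma\bigr)\boldsymbol\Sigma^{-1}$. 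The paper's form arises from its own (non-standard) normalization of the quadratic term in its equation for $\ell$. The structural three-term answer survives either way, and the sign is absorbed by the norm, but the scalar coefficients differ, so the claim of exact agreement needs the paper's normalization rather than the textbook one.
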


\begin{proof}
We use the results of \citet{4647791} here. For Gaussian Discriminant Analysis (which Naive Bayes is a special case of), the log-likelihood is given by:

\begin{equation}\label{gda}
\begin{array}{rcl}
\ell(\phi, \boldsymbol \mu_0, \boldsymbol \mu_1, \boldsymbol \Sigma) &=\sum_{i=1}^m \bigl\{ y^{(i)} \log \phi + (1 - y^{(i)}) \log (1 - \phi)   \\~\\  & -\frac{1}{2} \log |\boldsymbol \Sigma|+ C + (\textbf{x}^{(i)}-\boldsymbol \mu)^T \boldsymbol \Sigma^{-1} (\textbf{x}^{(i)} - \boldsymbol \mu) \bigr\}
\end{array}
\end{equation}

where $\phi$ is the parameter of the Bernoulli distribution; $\mu_0$ and $\mu_1$ are the mean vectors for the two classes and $\Sigma$ is the covariance matrix of the independent attributes. Because there are several parameters, we show that the {\smoothness} is \textit{constant} with respect to all except the covariance matrix, and therefore we use the {\smoothness} with respect to the covariance matrix in our experiments.

Taking a gradient of \eqref{gda} with respect to  $\phi$ gives us:

\[
\begin{aligned}
    \nabla_\phi \ell &= \frac{\sum y^{(i)}}{\phi} - \frac{\sum (1-y^{(i)})}{1-\phi} \\
    \nabla^2_\phi \ell &= -\frac{\sum y^{(i)}}{\phi^2} - \frac{\sum (1-y^{(i)})}{(1-\phi)^2}
\end{aligned}
\]

The second derivative has an extremum when  $\nabla_\phi^3 \ell = 0$:

\[
\begin{aligned}
    \nabla^3_\phi \ell &= \frac{2\sum y^{(i)}}{\phi^3} - \frac{2\sum (1-y^{(i)})}{(1-\phi)^3} = 0 \\
    \frac{\sum y^{(i)}}{\phi^3} &= \frac{\sum (1-y^{(i)})}{(1-\phi)^3} \\
    m\phi^3 &= \left( \sum y^{(i)} \right) (1-3\phi + 3\phi^2)
\end{aligned}
\]

But $\sum y^{(i)} = km$ for some $0 \leq k \leq 1$. Hence,
 
$$
\phi^3 = k(1-3\phi+3\phi^2)
$$

which has constant solutions. For example, if $k = \frac{1}{2}$ then $\phi=\frac{1}{2}$ and $\nabla_\phi^2 E \leq -4m$. Similarly, if $k=1$ then $\phi=1$ and $\nabla_\phi^2 E \leq -m$

From \eqref{gda},   computing   gradient with respect to
$\mu$  yields

{\small \[
    \begin{aligned} 
        \nabla_{\mu_0} \ell &= \nabla_{\mu_0} \sum_{i=1}^m (\textbf{x}^{(i)}-\boldsymbol \mu)^T \boldsymbol \Sigma^{-1} (\textbf{x}^{(i)} - \boldsymbol \mu) \\ 
        &= \nabla_{\mu_0} \sum_{i=1}^m \text{tr } (\textbf{x}^{(i)}-\boldsymbol \mu)^T \boldsymbol \Sigma^{-1} (\textbf{x}^{(i)} - \boldsymbol \mu)I \\ 
        &= -\sum_{i=1}^m \left( I(\textbf{x}^{(i)}-\boldsymbol \mu_0) \boldsymbol \Sigma^{-1} + I(\textbf{x}^{(i)}-\boldsymbol \mu_0) (\boldsymbol \Sigma^{-1})^T \right) [y^{(i)} = 0] \\ 
        &= -\sum_{i=1}^m \left( (\textbf{x}^{(i)}-\boldsymbol \mu_0) \boldsymbol \Sigma^{-1} + (\textbf{x}^{(i)}-\boldsymbol \mu_0) \boldsymbol \Sigma^{-1} \right) [y^{(i)} = 0] \\ &= -2 \boldsymbol \Sigma^{-1} \sum_{i=1}^m (\textbf{x}^{(i)} - \boldsymbol \mu_0) [y^{(i)} = 0] 
    \end{aligned}
\]}

where we used the identity $\nabla_A \text{tr } ABA^TC = CAB + C^TAB^T$ along with the trace trick. Clearly, the second derivative is constant with respect to $\boldsymbol \mu_0$.

Finally, we compute the smoothness with respect to $\boldsymbol\Sigma$. For the following, let $\langle x, y \rangle$ denote the Frobenius inner product defined as 
\[
\langle A, B \rangle = \sum\limits_{i=1}^m \sum\limits_{j=1}^n A_{ij} B_{ij} = \text{tr } A^T B
\]

Let $\boldsymbol w = ( \boldsymbol x - \boldsymbol \mu_i)$ (where $\mu_i$ is the mean
for class $i$). Then,
\[
\begin{aligned}
    d\ell &= \langle \boldsymbol w \boldsymbol w^T, (d\boldsymbol\Sigma)^{-1} \rangle - \frac{1}{2} \langle \boldsymbol\Sigma^{-1}, d\boldsymbol\Sigma \rangle \\
    &= \langle \boldsymbol w \boldsymbol w^T, -\boldsymbol\Sigma^{-1} d\boldsymbol\Sigma \boldsymbol\Sigma^{-1} \rangle - \frac{1}{2} \langle \boldsymbol\Sigma^{-1}, d\boldsymbol\Sigma \rangle \\
    &= -\left\langle \boldsymbol\Sigma^{-1} \boldsymbol w \boldsymbol w^T \boldsymbol\Sigma^{-1} + \frac{1}{2} \Sigma^{-1}, d\Sigma \right\rangle \\
    \dfrac{\partial\ell}{\partial\boldsymbol\Sigma} &= -\boldsymbol\Sigma^{-1} \left(\boldsymbol w \boldsymbol w^T + \frac{1}{2} \boldsymbol\Sigma \right)\boldsymbol\Sigma^{-1} = G
\end{aligned}
\]

Next, we compute the differential of $G$, which we use to compute the Hessian.

\[
\begin{aligned}
    dG &= -d\boldsymbol\Sigma^{-1} \left(\boldsymbol w \boldsymbol w^T + \frac{1}{2} \boldsymbol\Sigma \right)\boldsymbol\Sigma^{-1} - \boldsymbol\Sigma^{-1} \left( \frac{1}{2} d\boldsymbol\Sigma \right) \boldsymbol\Sigma^{-1} \\ &- \boldsymbol\Sigma^{-1} \left(\boldsymbol w \boldsymbol w^T + \frac{1}{2} \boldsymbol\Sigma \right)d\boldsymbol\Sigma^{-1} \\
    &= \boldsymbol\Sigma^{-1} d\boldsymbol\Sigma G -\frac{1}{2} \boldsymbol\Sigma^{-1} d\boldsymbol\Sigma \boldsymbol\Sigma^{-1} + G d\boldsymbol\Sigma \boldsymbol\Sigma^{-1} \\ 
    &= \langle \boldsymbol\Sigma^{-1} \mathcal{A}G - \frac{1}{2} \boldsymbol\Sigma^{-1} \mathcal{A} \boldsymbol\Sigma^{-1} + G \mathcal{A} \boldsymbol\Sigma^{-1}, d\boldsymbol\Sigma \rangle \\
    \dfrac{\partial G}{\partial \boldsymbol\Sigma} &= \boldsymbol\Sigma^{-1} \mathcal{A}G - \frac{1}{2} \boldsymbol\Sigma^{-1} \mathcal{A} \boldsymbol\Sigma^{-1} + G \mathcal{A} \boldsymbol\Sigma^{-1} = \mathcal{H}
\end{aligned}
\]

where $\mathcal{H}$ is the Hessian and $\mathcal{A}$ is the fourth-order identity tensor so that $\mathcal{A}_{ijkl} = \delta_{ik}\delta_{jl}$. In the above, the fact that $\Sigma$ and $G$ are symmetric was used in various steps.

Although the \textit{loss} to minimize would be defined as the negative log-likelihood, the negative sign would be removed by the use of a norm.
\end{proof}

\begin{theorem}[$L_2$ regularization]
    For a learner using an $L_2$ regularization term $\frac{\lambda}{2} \lVert \boldsymbol w \rVert^2_2$, the increase in {\smoothness} is $\lambda$. More generally, if a Tikhonov regularization term $\lVert \boldsymbol \Gamma \boldsymbol w \rVert^2_2$ is used, the increase in {\smoothness} is $2\lVert \boldsymbol \Gamma \rVert^2_F$.
\end{theorem}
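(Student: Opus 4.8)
The plan is to exploit the fact that the Hessian is a linear operator, so the Hessian of a sum of functions is the sum of their Hessians. I would write the regularized objective as $E_{\text{reg}} = E + R$, where $R$ is the regularization term, so that $\nabla^2_{\boldsymbol w} E_{\text{reg}} = \nabla^2_{\boldsymbol w} E + \nabla^2_{\boldsymbol w} R$. The key observation is that for both regularizers considered, $\nabla^2_{\boldsymbol w} R$ is a \emph{constant} matrix (independent of $\boldsymbol w$), so its contribution to $\sup_{\boldsymbol w}\lVert\nabla^2_{\boldsymbol w} E_{\text{reg}}\rVert$ is a fixed additive term that does not depend on where in weight space we evaluate. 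This is exactly the ``adds a constant term'' claim invoked informally after Theorem~\ref{th:ff}.

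First I would handle the plain $L_2$ case. With $R = \frac{\lambda}{2}\lVert\boldsymbol w\rVert_2^2$, a direct differentiation gives $\nabla_{\boldsymbol w} R = \lambda\boldsymbol w$ and hence $\nabla^2_{\boldsymbol w} R = \lambda I$. The spectral norm of $\lambda I$ is simply $\lambda$ (its only eigenvalue, with multiplicity $n$), so by the triangle inequality $\sup_{\boldsymbol w}\lVert\nabla^2 E + \lambda I\rVert \le \beta_E + \lambda$; that is, the regularizer inflates the smoothness by at most the constant $\lambda$. (For a convex loss, where $\nabla^2 E$ is positive semidefinite, this is in fact an equality, since every eigenvalue is shifted by exactly $\lambda$ — which covers the logistic-regression corollary.)

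Next I would treat the Tikhonov case. Writing $R = \lVert\boldsymbol\Gamma\boldsymbol w\rVert_2^2 = \boldsymbol w^T\boldsymbol\Gamma^T\boldsymbol\Gamma\boldsymbol w$, differentiation yields $\nabla_{\boldsymbol w} R = 2\boldsymbol\Gamma^T\boldsymbol\Gamma\boldsymbol w$ and $\nabla^2_{\boldsymbol w} R = 2\boldsymbol\Gamma^T\boldsymbol\Gamma$. I would then bound $\lVert 2\boldsymbol\Gamma^T\boldsymbol\Gamma\rVert$ via sub-multiplicativity of the Frobenius norm, $\lVert\boldsymbol\Gamma^T\boldsymbol\Gamma\rVert_F \le \lVert\boldsymbol\Gamma^T\rVert_F\lVert\boldsymbol\Gamma\rVert_F = \lVert\boldsymbol\Gamma\rVert_F^2$, which gives an increase of at most $2\lVert\boldsymbol\Gamma\rVert_F^2$. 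Taking $\boldsymbol\Gamma = \sqrt{\lambda/2}\,I$ specializes this back to the $L_2$ term, confirming consistency between the two parts.

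The main obstacle is reconciling the two norm conventions: the clean value $\lambda$ for the $L_2$ term arises naturally under the spectral norm (where $\lambda I$ has a single eigenvalue), whereas the $2\lVert\boldsymbol\Gamma\rVert_F^2$ estimate is most naturally obtained under the Frobenius norm through sub-multiplicativity. I would resolve this by appealing to the earlier remark that all norms are equivalent in finite dimension and that only the \emph{consistent ordering} of configurations matters, so the precise constant is immaterial. The real content of the theorem is simply that the regularizer contributes a term that is constant in $\boldsymbol w$, and hence can be absorbed into a proportionality constant and dropped when ranking hyper-parameter configurations by {\smoothness}.
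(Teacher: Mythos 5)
Your proposal is correct and matches the paper's own argument in substance: the paper likewise isolates the regularization term, differentiates it, and reads off a constant contribution ($\lambda$ from $\nabla_{\boldsymbol w} R = \lambda\boldsymbol w$, and $2\lVert\boldsymbol\Gamma\rVert_F^2$ from $\nabla_{\boldsymbol w} R = 2\boldsymbol\Gamma^T\boldsymbol\Gamma\boldsymbol w$), merely phrasing the result as a Lipschitz constant of the gradient difference quotient rather than as the norm of a constant Hessian -- two views the paper's own preliminaries already identify. If anything you are slightly more careful than the paper, which asserts the Tikhonov increase as an exact equality where, as your sub-multiplicativity step shows, $\lVert 2\boldsymbol\Gamma^T\boldsymbol\Gamma\rVert_F \le 2\lVert\boldsymbol\Gamma\rVert_F^2$ is only an upper bound, and which does not address the spectral-versus-Frobenius mismatch that you resolve by appealing to norm equivalence and the fact that only the consistent ordering of configurations matters.
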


\begin{proof}
    For $L_2$ regularization, the regularization term is $E = \frac{\lambda}{2}\lVert \boldsymbol w \rVert_2^2$. We therefore have $\nabla_{\boldsymbol w} E = \lambda \lVert \boldsymbol w \rVert_2$ and so $\lVert \nabla_{\boldsymbol w_1} E(\boldsymbol w_1) - \nabla_{\boldsymbol w_2} E(\boldsymbol w_2) \rVert \leq \lambda \lVert \boldsymbol w_1 - \boldsymbol w_2 \rVert$.
    For the general Tikhonov case,
    \[
        \begin{aligned}
            E(\boldsymbol w) &= \lVert \boldsymbol \Gamma \boldsymbol w \rVert^2_2 \\
            &= \boldsymbol w^T \boldsymbol \Gamma^T \boldsymbol \Gamma \boldsymbol w \\
            dE &= d\boldsymbol w^T \boldsymbol \Gamma^T \boldsymbol \Gamma \boldsymbol w + \boldsymbol w^T d\boldsymbol \Gamma^T \boldsymbol \Gamma \boldsymbol w + \boldsymbol w^T \boldsymbol \Gamma^T d\boldsymbol \Gamma + \boldsymbol w^T \boldsymbol \Gamma^T \boldsymbol \Gamma d\boldsymbol w \\
            &= (d\boldsymbol w)^T \boldsymbol \Gamma^T \boldsymbol \Gamma \boldsymbol w + \boldsymbol w^T \boldsymbol \Gamma^T \boldsymbol \Gamma (d\boldsymbol w) \\
            &= \boldsymbol w^T \boldsymbol \Gamma^T \boldsymbol \Gamma (d\boldsymbol w) + \boldsymbol w^T \boldsymbol \Gamma^T \boldsymbol \Gamma (d\boldsymbol w) \\
            \nabla_{\boldsymbol w} E &= 2\boldsymbol w^T \boldsymbol \Gamma^T \boldsymbol \Gamma = 2 \boldsymbol w^T \lVert \boldsymbol \Gamma \rVert^2_F
        \end{aligned}
    \]

    so that
    \[
        \frac{\lVert \nabla E(\boldsymbol w_1) - \nabla E(\boldsymbol w_2) \rVert}{\lVert \boldsymbol w_1 - \boldsymbol w_2 \rVert} = 2\lVert \boldsymbol \Gamma \rVert^2_F
    \]
\end{proof}

\subsection*{Why 30 Initial Samples Suffices}

Understanding why 30 samples are enough is a problem similar to the coupon collector's problem. Consider the following problem, which is equivalent to covering the entire hyper-parameter space. 
    
    \begin{formal}
    Suppose we have a space $[a, b] \subset \mathbb{R}$, and one iteratively chooses a point $x_i \overset{i.i.d}{\sim} \mathcal{U}[a, b]$. In doing so, one ``covers'' the range $(x_i-k, x_i+k)$ for some $k \ll b - a$. Define the ``coverage'' as the portion of the range that has been covered. Suppose $p$ points have been chosen. What are the bounds on the expected coverage?
    \end{formal}

    The probability of a point \textit{not} being covered by any of the previously chosen $p$ points is, by a union bound, lower bounded by $1 - \frac{2pk}{b-a} \approx \exp\left( -\frac{2pk}{b-a} \right)$. Therefore, the probability of it being covered by any of the $p$ points is upper bounded by $1 - \exp\left( -\frac{2pk}{b-a} \right)$.

    In the multi-dimensional hyper-parameter optimization setting, this is a trivial extension involving hyperrectangles. Specifically, the fraction of the range covered is upper bounded by:
    \begin{equation}
    1 - \exp\left( -\frac{p (2k)^d}{(b-a)^d} \right)
    \label{eq:pcover:simple}
    \end{equation}
    where $d$ is the number of dimensions. More generally, if we instead assume that each dimension has size $L_i$ and choosing a point covers $2k_i$ along that dimension, this expression becomes
    \begin{equation}
    1 - \exp\left( -p \prod\limits_{i=1}^d \frac{2k_i}{L_i} \right)
    \label{eq:pcover:complex}
    \end{equation}

    To get a lower bound, let $X_p$ denote the coverage at $p$ samples. We are interested in $\mathbb{E}[X_p]$. We can then form a recurrence relation based on the idea that we need a distance of at least $k$ from any other point to get the maximum coverage of $2k$ from the new point:
    \begin{align*}
        \mathbb{E}[X_p - X_{p-1} | X_{p-1}] &\geq 2k ((b-a) - X_{p-1} - k) \\
        \mathbb{E}[X_p - X_{p-1}] &\overset{(i)}{=} \mathbb{E}_{X_{p-1}}[\mathbb{E}[X_p - X_{p-1} | X_{p-1}]] \\
        &\geq \mathbb{E}[2k ((b-a) - X_{p-1} - k)] \\
        \mathbb{E}[X_p] &\geq (1-2k)\mathbb{E}[X_{p-1}] + 2k(b-a-k)
    \end{align*}
    where (i) is the tower rule. We can now solve this recurrence relation, using $E_p$ to denote $\mathbb{E}[X_p]$. Trivially, $E_1 = 2k$.
    \begin{align*}
        E_p &\geq (1-2k)E_{p-1} + 2k(b-a-k) \\
        &= (1-2k)^{t}E_{p-t} + 2k(b-a-k) \sum\limits_{i=0}^{t-1} (1-2k)^i \\
        &= (1-2k)^t E_{p-t} + 2k(b-a-k) \frac{1 - (1-2k)^t}{2k} \\
        &= (1-2k)^t E_{p-t} + (b-a-k)\left( 1 - (1-2k)^t \right) \\
        &= (1-2k)^{p-1}(2k) + (b-a-k)\left(1 - (1-2k)^{p-1} \right) \\
        &= (b-a-k) - (1-2k)^{p-1}(b-a-3k)
    \end{align*} \qed

    Both these bounds are exponential, so that exploring more options in a randomized fashion has very quickly diminishing returns. Additionally, we add that not all hyper-parameters have equal value: for example, some hyper-parameters are significantly more important than others \cite{van2018hyperparameter, bergstra2011algorithms}.

\ifshowchanges

\normalsize
\section*{Response to Reviewers}
\subsection*{Response to Editor}

We are grateful for the careful reviews by the Associate Editor and the reviewers. We have carefully reviewed each of the comments received, and addressed them below. Where changes were requested, we have made them in the text; all changed text from the previous revision is in blue.

\subsection{Response to Associate Editor}

Thank you for faciliating the review of our paper. We have implemented your recommendations in our paper as follows:

\begin{formal}
    Clearly emphasize that the proposed approach is specific to SE and that the aim of the paper is to demonstrate how SE applications require different hyperparameter values compared to other domains. Highlight that these differences are assessed through the concept of smoothness, reflecting the unique characteristics of the field.
\end{formal}

\begin{response}{Ea1}
    We have added these points to the introduction, clarifying the specific contributions of the paper, highlighting the relevance to SE specifically \citeresp{Ea1.1}.
\end{response}

\begin{formal}
    Address any potential limitations noted by the reviewers, such as the variation in the specific form of smoothness derived across different types of learners.
\end{formal}

\begin{response}{Ea2}
    We have made the limitations of our method more explicit in two places: first, in the introduction so that readers are clearly made aware of them \citeresp{Ea1.1}; and second, in the Limitations section \citeresp{3a1.1}.
\end{response}

\begin{formal}
    Expand on future work, particularly regarding learners based on random forests or, more broadly, tree-based methods.
\end{formal}

\begin{response}{Ea3}
    We have now separated Future Work into its own section, with a focus on tree-based methods \citeresp{Ea3.1}.
\end{response}

\subsection{Response to Reviewer 3}

Thank you for the detailed feedback and constructive comments! Please find our responses to your concerns below.

\begin{formal}
    In particular, the specific form of smoothness derived in this manuscript, which is central to SMOOTHIE,  varies significantly across different types of learners (Section 4). The need to switch between predefined versions of smoothness for different learner types poses significant challenges in practical applications.
\end{formal}

\begin{response}{3a1}
    You are correct, this is a limitation of the method that makes implementation more tricky. Our reference implementation\footnote{\url{https://github.com/yrahul3910/smoothness-hpo/blob/main/smoothness-hpo/classical/smoothness/hpo/smoothness.py#L16}} uses function overloading based on the type of the classifier object passed.

We agree with you that this final draft needs to comment on this limitation to the work.
To   make this more explicit. our Limitations section \citeresp{3a1.1} points out that this method relies on the assumption that practitioners using this have a set of learning algorithms in mind. 
\end{response}

\begin{formal}
    Another concern regarding generalizability stems from the dependence of SMOOTHIE’s performance on dataset smoothness. As the authors note, “if the smoothness is high, SMOOTHIE performs poorly (compared to standard AI methods).” This introduces a paradoxical situation for users, as the smoothness of a dataset cannot be assessed prior to employing the proposed method.
\end{formal}

\begin{response}{3a2}
You are right... aprori we cannot predict for smoothness. 
However, we do not think is a fundamental limitation to the work.
 Computing the smoothness is a very cheap operation (see Figure \ref{fig:runtime}), and allows practitioners to quickly assess whether {\IT} is suitable for their application.

    It is worth noting here that the crux of our paper is that hyper-parameter optimization can and should be done differently for SE, based on the observation that SE datasets have different properties (smoothness values).
    Therefore, it is very likely that {\IT} will outperform other methods on SE data. Although it performs poorly for datasets of high smoothness, we show in Figure \ref{fig:beta-dist} that this is actually quite rare--of the 19 SE datasets, only 2 had high smoothness and caused {\IT} to perform worse than BOHB or DEHB. Moreover, on those two datasets, random hyper-parameter optimization does best (on one it ties with DEHB and outperforms BOHB; on the other, it outperforms both)--as such, standard AI methods such as DEHB and BOHB are also not universally the best.
\end{response}

\begin{formal}
    Beyond these, it seems that the authors are not aware that there are two main types of ``landscapes'' in machine learning context: The first concerns the loss landscapes (surfaces) of neural networks, as in [2], which focus on the loss optimization during the training of a specific network. The second interpretation is adopted from the ``fitness landscape'' metaphor in evolutionary biology and pertains to the distribution of model loss across a vast hyperparameter configuration space, as explored in works such as [3, 4] [...] Currently, by reading Section 7.1, it seems to me that the authors mistakenly thought that the HPO process is operated on the first type of landscape (“We offer the hypothesis that algorithms like BOHB need a bumpy landscape to distinguish between different parts of the data, in order to decide what regions to explore, so BOHB prefers a bumpier space”), whereas actually, the central goal of BOHB is to navigate through the second type of landscape to locate prominent configuration regions.
\end{formal}

\begin{response}{3a3}
    You are correct. Our paper was imprecise in this distinction, and we have made this explicit \citeresp{3a3.1}. Specifically, we compute the smoothness over the subset of \textit{loss landscapes} obtained by choosing a specific hyper-parameter configuration, and use it to decide if it is worth pursuing. This search over hyper-parameters, is over the hyper-parameter landscape, and is done in a randomized fashion.

    We have also added text precisely stating the bilevel optimization problem that {\IT} solves \citeresp{3a3.2}.



\end{response}

\fi
\end{document}